\newtheorem{theorem}{Theorem}
\newtheorem{corollary}{Corollary}
\newtheorem{definition}{Definition}
\newtheorem{proposition}{Proposition}
\newtheorem{example}{Example}
\def\thmhead@plain#1#2#3{%
  \thmname{#1}\thmnumber{\@ifnotempty{#1}{ }\@upn{#2}}%
  \thmnote{ {\the\thm@notefont#3}}}
\let\thmhead\thmhead@plain
\DeclareMathOperator{\Tr}{Tr}
\begin{document}
%%%%%%%%%%%%%%%%%%%%%%
%       Title        %
%%%%%%%%%%%%%%%%%%%%%%
\title{Factorizability of multi-party quantum sequence discrimination \\
under local operations and classical communication}
\author{Donghoon Ha}
\affiliation{Department of Applied Mathematics and Institute of Natural Sciences, Kyung Hee University, Yongin 17104, Republic of Korea}
\author{Jeong San Kim}
\email{freddie1@khu.ac.kr}
\affiliation{Department of Applied Mathematics and Institute of Natural Sciences, Kyung Hee University, Yongin 17104, Republic of Korea}
%%%%%%%%%%%%%%%%%%%%%%
%      Abstract      %
%%%%%%%%%%%%%%%%%%%%%%
\begin{abstract}
We consider multi-party quantum sequence discrimination under local operations and classical communication(LOCC), and provide conditions under which the optimal LOCC discrimination of a multi-party quantum sequence ensemble can be factorized into that of each individual ensemble. In other words, the optimal LOCC discrimination of a multi-party quantum sequence ensemble can be achieved just by performing optimal LOCC discrimination independently at each step of the quantum sequence. We also illustrate through examples of multi-party quantum states that such factorizability of optimal LOCC discrimination is possible. We further establish a necessary and sufficient condition under which the optimal LOCC discrimination of a multi-party quantum state ensemble can be realized just by guessing the state with the largest probability. Our results can provide a useful application to investigate the fundamental limits of quantum data hiding.
\end{abstract}
%%%%%%%%%%%%%%%%%%%%%%
\maketitle
%%%%%%%%%%%%%%%%%%%%%%

%%%%%%%%%%%%%%%%%%%%%%
%      Section       %
%%%%%%%%%%%%%%%%%%%%%%
\section{Introduction}\label{sec:intro}
%%%%%%%%%%%%%%%%%%%%%%
One of the fundamental tasks in quantum information processing is to determine which state is prepared from a quantum state ensemble, that is, quantum state discrimination\cite{chef2000,berg2007,barn20091,bae2015}.
In general, orthogonal quantum states can be perfectly discriminated using an appropriate measurement, whereas no measurement can perfectly discriminate non-orthogonal quantum states.
To address this limitation, various state discrimination strategies have been developed to optimally discriminate nonorthogonal quantum states, including minimum-error discrimination, unambiguous discrimination and maximum-confidence discrimination\cite{hels1969,ivan1987,diek1988,pere1988,crok2006}.
%%%%%%%%%%%%%%%%%%%%%%

%%%%%%%%%%%%%%%%%%%%%%
In multi-party systems, the situation is more complicated due to the possible constraints of \emph{local operations and classical communication}(LOCC); some orthogonal quantum states cannot be perfectly discriminated only by LOCC\cite{chit2014,benn19991,ghos2001}. 
Moreover, there exist some multi-party non-orthogonal quantum states that cannot be optimally discriminated using only LOCC\cite{pere1991,duan2007,chit2013}. 
Thus it is important and even necessary to investigate the fundamental limit of optimal LOCC discrimination for the full characterization of multi-party quantum state discrimination.
However, characterizing optimal LOCC discrimination remains a hard task, largely due to the lack of a well-established mathematical structure for LOCC.
%%%%%%%%%%%%%%%%%%%%%%

%%%%%%%%%%%%%%%%%%%%%%
When multiple quantum state ensembles are involved, the situation becomes significantly more intricate due to the need to determine not just a single quantum state, but a sequence of states, each independently prepared from a quantum state ensemble. 
Quantum sequence discrimination naturally appears in various quantum information tasks such as multiple-copy discrimination and quantum change-point detection\cite{higg2011,sent2016,sent2017}. 
Accordingly, a deeper understanding of quantum sequence discrimination is beneficial to uncovering broader principles in quantum information processing.
%%%%%%%%%%%%%%%%%%%%%%

%%%%%%%%%%%%%%%%%%%%%%
Recently, it was shown that the minimum-error discrimination of a quantum sequence ensemble can always be factorized into that of each individual ensemble. In other words, the minimum-error discrimination of a quantum sequence ensemble can be realized just by performing minimum-error discrimination independently at each step of the quantum sequence.
This factorizability was also proven to hold in optimal unambiguous state discrimination\cite{gupt20241,gupt20242}.
This naturally raises the question of whether the optimal LOCC discrimination of a multi-party quantum sequence ensemble can also be factorized into that of each individual ensemble.
%%%%%%%%%%%%%%%%%%%%%%

%%%%%%%%%%%%%%%%%%%%%%
Studying LOCC discrimination of a multi-party quantum state ensemble plays an central role in various quantum information processing tasks such as quantum data hiding and quantum secret sharing\cite{terh2001,divi2002,egge2002,raha2015,wang2017}.
The nonlocality revealed through the gap between global and LOCC-based discrimination strategies is a key resource to prevent individual parties from accessing concealed information unless they collaborate.
This inherent limitation of LOCC protocols underscores their role not merely as a technical constraint, but as an essential feature enabling quantum privacy, security, and control.
%%%%%%%%%%%%%%%%%%%%%%

%%%%%%%%%%%%%%%%%%%%%%
Here, we consider multi-party quantum sequence discrimination under LOCC constraints, and provide conditions under which the optimal LOCC discrimination of a multi-party quantum sequence ensemble can be factorized into that of each individual ensemble. 
In other words, the optimal LOCC discrimination of a multi-party quantum sequence ensemble can be achieved just by performing optimal LOCC discrimination independently at each step of the quantum sequence. 
Moreover, our results are illustrated with examples of multi-party quantum states in an arbitrary dimension, showing the cases that such factorizability of optimal LOCC discrimination is possible.
%%%%%%%%%%%%%%%%%%%%%%

%%%%%%%%%%%%%%%%%%%%%%
This paper is organized as follows.
In Sec.~\ref{sec:qsd}, we first recall the definitions and properties of multi-party quantum state discrimination under LOCC constraints.
We further provide a necessary and sufficient condition under which the optimal LOCC discrimination of a multi-party quantum state ensemble can be realized just by guessing the state with the largest probability.
In Sec.~\ref{sec:qseq}, we review the definitions and properties of quantum sequence discrimination. 
In Sec.~\ref{sec:ldmqs}, we establish conditions under which the optimal LOCC discrimination of a multi-party quantum sequence ensemble can be factorized into that of each individual ensemble.
Our results are illustrated with examples of multi-party quantum states in an arbitrary dimension.
In Sec.~\ref{sec:dis}, we summarize our results and discuss a useful application.
We also suggest a direction for future research.
%%%%%%%%%%%%%%%%%%%%%%

%%%%%%%%%%%%%%%%%%%%%%
%      Section       %
%%%%%%%%%%%%%%%%%%%%%%
\section{Quantum state discrimination}\label{sec:qsd}
%%%%%%%%%%%%%%%%%%%%%%
For a multi-party Hilbert space $\mathcal{H}=\bigotimes_{k=1}^{m}\mathbb{C}^{d_{k}}$ with $m,d_{1},\ldots,d_{m}\geqslant2$, let us denote by $\mathbb{H}$ the set of all Hermitian operators acting on $\mathcal{H}$.
We also denote by $\mathbb{H}_{+}$ the set of all positive-semidefinite operators in $\mathbb{H}$, that is,
\begin{equation}\label{eq:hpdf}
\mathbb{H}_{+}=\{E\in\mathbb{H}\,|\,E:\mbox{positive semidefinite}\}.
\end{equation}
A multi-party quantum state is represented by $\rho\in\mathbb{H}_{+}$ with $\Tr\rho=1$.
A measurement is described by $\{M_{i}\}_{i}\subseteq\mathbb{H}_{+}$ satisfying $\sum_{i}M_{i}=\mathbbm{1}$ where $\mathbbm{1}$ is the identity operator in $\mathbb{H}$.
When a measurement $\{M_{i}\}_{i}$ is applied to a state $\rho$, the probability of observing the measurement outcome corresponding to $M_{i}$ is $\Tr(\rho M_{i})$.
%%%%%%%%%%%%%%%%%%%%%%

%%%%%%%%%%%%%%%%%%%%%%
Let us consider the situation of discriminating the multi-party quantum states from the ensemble
\begin{equation}\label{eq:esb}
\mathcal{E}=\{\eta_{i},\rho_{i}\}_{i=1}^{n},
\end{equation}
where the state $\rho_{i}$ is prepared with the \emph{nonzero} probability $\eta_{i}$ for each $i=1,\ldots,n$.
To guess the prepared state from the ensemble $\mathcal{E}$, we implement the decision rule using a measurement 
\begin{equation}\label{eq:mes}
\mathcal{M} = \{M_{i}\}_{i=1}^{n}, 
\end{equation}
where detecting $M_{i}$ indicates that the prepared state is inferred to be $\rho_{i}$.
The \emph{minimum-error discrimination} of $\mathcal{E}$ is to achieve the maximum average probability of correctly guessing the prepared state from $\mathcal{E}$, that is,
\begin{equation}\label{eq:pge}
p_{\sf G}(\mathcal{E})=\max_{\mathcal{M}}\sum_{i=1}^{n}\eta_{i}\Tr(\rho_{i}M_{i}),
\end{equation}
where the maximum is taken over all possible measurements\cite{hels1969}.
%%%%%%%%%%%%%%%%%%%%%%

%%%%%%%%%%%%%%%%%%%%%%
\begin{definition}\label{def:sep}
$E\in\mathbb{H}_{+}$ is called \emph{separable} if it can be expressed as the sum of positive-semidefinite product operators, that is,
\begin{equation}\label{eq:sep}
E=\sum_{s}\bigotimes_{k=1}^{m}E_{s,k},
\end{equation}
where $E_{s,k}$ is a positive-semidefinite operator on $\mathbb{C}^{d_{k}}$ of $\mathcal{H}$ for each $k=1,\ldots,m$.
\end{definition}
%%%%%%%%%%%%%%%%%%%%%%
\noindent
We denote by $\mathbb{SEP}$ the set of all separable operators in $\mathbb{H}_{+}$, that is,
\begin{equation}\label{eq:sepd}
\mathbb{SEP}=\{E\in\mathbb{H}_{+}\,|\,E:\mbox{separable}\}.
\end{equation}
A measurement $\{M_{i}\}_{i}$ is called a \emph{separable measurement} if $\{M_{i}\}_{i}\subseteq\mathbb{SEP}$. We also say that a measurement is a \emph{LOCC measurement} if it can be realized by LOCC. Note that every LOCC measurement is a separable measurement\cite{chit2014}.
%%%%%%%%%%%%%%%%%%%%%%

%%%%%%%%%%%%%%%%%%%%%%
When only separable measurements are available, we denote by $p_{\sf SEP}(\mathcal{E})$ the maximum average probability of correctly guessing the prepared state from the ensemble $\mathcal{E}$ in Eq.~\eqref{eq:esb}, that is,
\begin{equation}\label{eq:psep}
p_{\sf SEP}(\mathcal{E})=\max_{\textsf{Separable}\,\mathcal{M}}\sum_{i=1}^{n}\eta_{i}\Tr(\rho_{i}M_{i}).
\end{equation}
Similarly, we denote
\begin{equation}\label{eq:ple}
p_{\sf L}(\mathcal{E})=\max_{\textsf{LOCC}\,\mathcal{M}}\sum_{i=1}^{n}\eta_{i}\Tr(\rho_{i}M_{i}),
\end{equation}
where the maximum is taken over all possible LOCC measurements.
%%%%%%%%%%%%%%%%%%%%%%

%%%%%%%%%%%%%%%%%%%%%%
\begin{definition}\label{def:bpo}
$E\in\mathbb{H}$ is called \emph{block positive} if it has non-negative mean value for all separable states, that is,
\begin{equation}\label{eq:bpo}
\Tr(E\sigma)\geqslant0
\end{equation}
for any separable state $\sigma$.
\end{definition}
%%%%%%%%%%%%%%%%%%%%%%
\noindent
We use $\mathbb{SEP}^{*}$ to denote the set of all block-positive operators in $\mathbb{H}$, that is,
\begin{equation}\label{eq:bpdf}
\mathbb{SEP}^{*}=\{E\in\mathbb{H}\,|\,E:\mbox{block positive}\}.
\end{equation}
Note that $\mathbb{SEP}^{*}$ is the dual cone of $\mathbb{SEP}$, but $\mathbb{SEP}$ is also the dual cone of $\mathbb{SEP}^{*}$ because $\mathbb{SEP}$ is convex and closed\cite{boyd2004}.
We also note that
\begin{equation}\label{eq:subs}
\mathbb{SEP}\subseteq\mathbb{H}_{+}\subseteq\mathbb{SEP}^{*}\subseteq\mathbb{H},
\end{equation}
where the first inclusion is from the definition of $\mathbb{SEP}$, the second inclusion is from the fact that $\Tr(EF)\geqslant0$ for all $E\in\mathbb{H}_{+}$ and all $F\in\mathbb{SEP}$, and
the last inclusion is from the definition of $\mathbb{SEP}^{*}$.
%%%%%%%%%%%%%%%%%%%%%%

%%%%%%%%%%%%%%%%%%%%%%
For a given ensemble $\mathcal{E}$ in Eq.~\eqref{eq:esb}, the following proposition provides a necessary and sufficient condition for a separable measurement to realize $p_{\sf SEP}(\mathcal{E})$ in Eq.~\eqref{eq:psep}.
%%%%%%%%%%%%%%%%%%%%%%
\begin{proposition}[\cite{ha20231}]\label{pro:nsps}
For a multi-party quantum state ensemble $\mathcal{E}=\{\eta_{i},\rho_{i}\}_{i=1}^{n}$ and a separable measurement $\mathcal{M}=\{M_{i}\}_{i=1}^{n}$, $\mathcal{M}$ provides $p_{\sf SEP}(\mathcal{E})$ if and only if there exists $H\in\mathbb{H}$ satisfying 
\begin{subequations}\label{eq:tmhr}
\begin{eqnarray}
H-\eta_{i}\rho_{i}\in\mathbb{SEP}^{*},\label{eq:tmhr1}\\
\Tr[M_{i}(H-\eta_{i}\rho_{i})]=0\label{eq:tmhr2}
\end{eqnarray}
\end{subequations}
for all $i=1,\ldots,n$. 
\end{proposition}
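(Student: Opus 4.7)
The plan is to read the two conditions in \eqref{eq:tmhr} as the KKT/complementary-slackness relations for the conic optimization problem defining $p_{\sf SEP}(\mathcal{E})$. The primal program is
\begin{equation}
\max \sum_{i=1}^{n} \eta_{i}\Tr(\rho_{i}M_{i}) \quad \text{subject to} \quad M_{i}\in\mathbb{SEP},\ \sum_{i=1}^{n}M_{i}=\mathbbm{1}.
\end{equation}
Introducing a Lagrange multiplier $H\in\mathbb{H}$ for the normalization constraint, the Lagrangian rearranges to $\Tr(H)-\sum_{i}\Tr[(H-\eta_{i}\rho_{i})M_{i}]$. Because each $M_{i}$ ranges over the cone $\mathbb{SEP}$, the inner maximization is bounded only when $\Tr[(H-\eta_{i}\rho_{i})M_{i}]\geqslant0$ for all $M_{i}\in\mathbb{SEP}$, i.e., when $H-\eta_{i}\rho_{i}$ lies in the dual cone $\mathbb{SEP}^{*}$. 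This identifies the dual program
\begin{equation}
\min \Tr(H) \quad \text{subject to} \quad H-\eta_{i}\rho_{i}\in\mathbb{SEP}^{*},\ i=1,\ldots,n,
\end{equation}
which already explains why \eqref{eq:tmhr1} should be the feasibility condition on $H$.

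For the ``if'' direction, suppose $H$ satisfies \eqref{eq:tmhr}. For an arbitrary separable measurement $\{M_{i}'\}_{i=1}^{n}$, I would expand
\begin{equation}
\sum_{i=1}^{n}\eta_{i}\Tr(\rho_{i}M_{i}')=\Tr(H)-\sum_{i=1}^{n}\Tr[(H-\eta_{i}\rho_{i})M_{i}']\leqslant\Tr(H),
\end{equation}
using $\sum_{i}M_{i}'=\mathbbm{1}$ together with the dual-cone pairing $\Tr(EF)\geqslant0$ for $E\in\mathbb{SEP}^{*},\, F\in\mathbb{SEP}$. Condition \eqref{eq:tmhr2} then turns the same expansion into an equality for $\mathcal{M}$ itself, showing that $\mathcal{M}$ attains $p_{\sf SEP}(\mathcal{E})$.

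The ``only if'' direction needs strong duality. The primal is strictly feasible: the measurement $M_{i}=\mathbbm{1}/n$ places each operator in the relative interior of $\mathbb{SEP}$, since $\mathbbm{1}=\bigotimes_{k=1}^{m}\mathbbm{1}_{k}$ is a full-rank product operator and hence an interior point of $\mathbb{SEP}\subseteq\mathbb{H}$. Slater's condition therefore yields equality of the primal and dual optima and attainment on the dual side by some $H^{\star}$ with $\Tr(H^{\star})=p_{\sf SEP}(\mathcal{E})$. If $\mathcal{M}=\{M_{i}\}$ already realizes $p_{\sf SEP}(\mathcal{E})$, then subtracting the equal primal and dual values gives $\sum_{i}\Tr[(H^{\star}-\eta_{i}\rho_{i})M_{i}]=0$, and since each summand is non-negative by \eqref{eq:tmhr1} applied to $H^{\star}$, every term must vanish, producing \eqref{eq:tmhr2}. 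The main obstacle I expect is the conic Slater/attainment step: one must check carefully that $\mathbb{SEP}$ has non-empty interior in $\mathbb{H}$ and that the dual minimum is actually achieved, not merely approached. Once these convex-analytic points are settled, the KKT characterization in \eqref{eq:tmhr} follows from the standard cone-programming duality machinery.
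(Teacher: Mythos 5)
The paper does not prove Proposition~\ref{pro:nsps} itself; it imports the statement from Ref.~\cite{ha20231}. Your conic-duality argument is the standard route to this result and is essentially sound: the ``if'' direction is a complete and correct weak-duality/complementary-slackness computation, and the ``only if'' direction correctly reduces to strong duality with dual attainment followed by the vanishing of a sum of non-negative terms. The one step you should tighten is the Slater point: the assertion that $\mathbbm{1}/n$ is interior to $\mathbb{SEP}$ \emph{because} $\mathbbm{1}$ is a full-rank product operator is not a justification --- full-rankness of a product operator does not by itself place it in the interior of the separable cone. What you actually need is the separable-ball theorem (there is a norm ball of separable states around the maximally mixed state), from which interiority of $\mathbbm{1}$, and hence of any full-rank product operator via an invertible local filtering, follows; with that reference in place, Slater's condition gives both zero duality gap and dual attainment, and your KKT characterization of \eqref{eq:tmhr} goes through.
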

%%%%%%%%%%%%%%%%%%%%%%

%%%%%%%%%%%%%%%%%%%%%%
From the definitions of $p_{\sf G}(\mathcal{E})$, $p_{\sf SEP}(\mathcal{E})$ and $p_{\sf L}(\mathcal{E})$ in Eqs.~\eqref{eq:pge}, \eqref{eq:psep} and \eqref{eq:ple}, respectively, we have
\begin{equation}\label{eq:ineqp}
p_{\sf L}(\mathcal{E})\leqslant
p_{\sf SEP}(\mathcal{E})\leqslant
p_{\sf G}(\mathcal{E}).
\end{equation}
For each $i=1,\ldots,n$, guessing the prepared state as $\rho_{i}$ is obviously an LOCC measurement, therefore we have
\begin{equation}\label{eq:plet}
\eta_{i}\leqslant p_{\sf L}(\mathcal{E}).
\end{equation}
%%%%%%%%%%%%%%%%%%%%%%

%%%%%%%%%%%%%%%%%%%%%%
The following theorem establishes a necessary and sufficient condition for Inequality~\eqref{eq:plet} to be saturated.
%%%%%%%%%%%%%%%%%%%%%%
\begin{theorem}\label{thm:plq1}
For a multi-party quantum state ensemble $\mathcal{E}=\{\eta_{i},\rho_{i}\}_{i=1}^{n}$ and $x\in\{1,\ldots,n\}$, we have
\begin{equation}\label{eq:plq1}
p_{\sf L}(\mathcal{E})=\eta_{x}
\end{equation}
if and only if 
\begin{equation}\label{eq:bpnc}
\eta_{x}\rho_{x}-\eta_{i}\rho_{i}\in\mathbb{SEP}^{*}
\end{equation}
for all $i=1,\ldots,n$. In this case, we have
\begin{equation}\label{eq:itce}
p_{\sf L}(\mathcal{E})=p_{\sf SEP}(\mathcal{E}).
\end{equation}
\end{theorem}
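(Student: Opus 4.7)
The plan is to prove both directions by leveraging Proposition~\ref{pro:nsps} on one side and constructing a concrete LOCC measurement on the other, and then deduce \eqref{eq:itce} as a corollary.

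For the sufficient direction, I would assume $\eta_x\rho_x-\eta_i\rho_i\in\mathbb{SEP}^{*}$ for every $i$ and test the trivial ``always guess $x$'' measurement $\mathcal{M}_{x}=\{M_i\}_{i=1}^n$ with $M_{x}=\mathbbm{1}$ and $M_{i}=0$ for $i\neq x$, which is separable (in fact, LOCC). Taking $H=\eta_x\rho_x\in\mathbb{H}$ in Proposition~\ref{pro:nsps}, condition \eqref{eq:tmhr1} is exactly the hypothesis, and \eqref{eq:tmhr2} holds because $\Tr[M_x(H-\eta_x\rho_x)]=\Tr[\mathbbm{1}\cdot 0]=0$ and $\Tr[M_i(H-\eta_i\rho_i)]=0$ for $i\neq x$. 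Hence $p_{\sf SEP}(\mathcal{E})=\eta_x$, and combining with \eqref{eq:ineqp} and \eqref{eq:plet} yields both \eqref{eq:plq1} and \eqref{eq:itce} at once.

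For the necessary direction, assume $p_{\sf L}(\mathcal{E})=\eta_x$ and fix any $i\in\{1,\ldots,n\}$ and any product vector $\ket{\Psi}=\ket{\psi_1}\otimes\cdots\otimes\ket{\psi_m}$. The key construction is the two-outcome LOCC measurement obtained by letting each party $k$ perform the local projective measurement $\{\ket{\psi_k}\bra{\psi_k},\mathbbm{1}_k-\ket{\psi_k}\bra{\psi_k}\}$ in sequence, classically announcing the outcome, and post-processing so that the joint outcome ``all projections onto $\ket{\psi_k}$'' is labeled $i$ and every other joint outcome is labeled $x$. This yields a genuine LOCC measurement with $M_i=\ket{\Psi}\bra{\Psi}$, $M_x=\mathbbm{1}-\ket{\Psi}\bra{\Psi}$, and $M_j=0$ otherwise. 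Its success probability is
\begin{equation}
\eta_i\braket{\Psi|\rho_i|\Psi}+\eta_x\bigl(1-\braket{\Psi|\rho_x|\Psi}\bigr)=\eta_x-\braket{\Psi|(\eta_x\rho_x-\eta_i\rho_i)|\Psi},
\end{equation}
and this must be bounded above by $p_{\sf L}(\mathcal{E})=\eta_x$, forcing $\braket{\Psi|(\eta_x\rho_x-\eta_i\rho_i)|\Psi}\geqslant 0$. Since this holds for every product vector and every separable state is a convex combination of such rank-one product operators, we obtain $\Tr[(\eta_x\rho_x-\eta_i\rho_i)\sigma]\geqslant 0$ for all separable $\sigma$, i.e.\ $\eta_x\rho_x-\eta_i\rho_i\in\mathbb{SEP}^{*}$.

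The main obstacle is the necessary direction: one must identify an LOCC measurement whose optimality constraint exactly encodes block positivity of $\eta_x\rho_x-\eta_i\rho_i$. Once the ``sequential local projection onto $\ket{\psi_k}$'' measurement is chosen, the argument is a one-line inequality followed by a convex-combination extension. The sufficient direction is a clean application of Proposition~\ref{pro:nsps}, and \eqref{eq:itce} needs no separate work since it emerges already in the sufficient step.
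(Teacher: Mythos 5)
Your proof is correct, and it follows the same overall strategy as the paper's: the necessity direction is obtained by testing a two-outcome LOCC measurement that assigns a separable effect to outcome $i$ and its complement to outcome $x$, and the sufficiency direction together with Eq.~\eqref{eq:itce} follows by sandwiching $p_{\sf L}(\mathcal{E})$ between $\eta_{x}$ and $p_{\sf SEP}(\mathcal{E})$. The two places where you deviate are both in the direction of self-containedness. For necessity, the paper uses the measurement $\{\sigma,\mathbbm{1}-\sigma\}$ for an \emph{arbitrary} separable state $\sigma$ and must cite an external result to justify that this is an LOCC measurement; your choice of sequential local projections onto a unit product vector $\ket{\Psi}$ is manifestly a one-round LOCC protocol, and passing from pure product vectors to all separable states by convexity is immediate, so no citation is needed. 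For sufficiency, the paper simply invokes the known equivalence between $p_{\sf SEP}(\mathcal{E})=\eta_{x}$ and Condition~\eqref{eq:bpnc}, whereas you rederive the implication you actually need from Proposition~\ref{pro:nsps} by taking $H=\eta_{x}\rho_{x}$ and the ``always guess $x$'' measurement; this is a clean application (the only point to make explicit is that $\mathbbm{1}$ and the zero operators are separable, which they are). Both routes are valid; yours trades two citations for two short explicit constructions, at no cost in length.
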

%%%%%%%%%%%%%%%%%%%%%%
\begin{proof}
We first note that it is already established that 
\begin{equation}\label{eq:bpim}
p_{\sf SEP}(\mathcal{E})=\eta_{x}
\end{equation}
if and only if Condition~\eqref{eq:bpnc} is satisfied\cite{ha20231}.
%%%%%%%%%%%%%%%%%%%%%%

%%%%%%%%%%%%%%%%%%%%%%
Let us first assume that Eq.~\eqref{eq:plq1} is satisfied.
For each $i=1,\ldots,n$, we have
\begin{equation}\label{eq:etin}
\eta_{x}-\Tr[(\eta_{x}\rho_{x}-\eta_{i}\rho_{i})\sigma]=\eta_{x}\Tr[\rho_{x}(\mathbbm{1}-\sigma)]
+\eta_{i}\Tr(\rho_{i}\sigma)\leqslant p_{\sf L}(\mathcal{E})=\eta_{x}
\end{equation}
for any separable state $\sigma$, where the last equality is due to our assumption and the inequality is from the definition of $p_{\sf L}(\mathcal{E})$ in Eq.~\eqref{eq:ple} and the fact that $\{\sigma,\mathbbm{1}-\sigma\}$ is an LOCC measurement\cite{ha20242}.
Thus, Inequality~\eqref{eq:etin} leads us to
\begin{equation}\label{eq:lubc}
\Tr[(\eta_{x}\rho_{x}-\eta_{i}\rho_{i})\sigma]\geqslant0
\end{equation}
for any separable state $\sigma$, therefore Condition~\eqref{eq:bpnc} holds.
Moreover, Eq.~\eqref{eq:itce} is satisfied because Eqs.~\eqref{eq:plq1} and \eqref{eq:bpim} lead us to Eq.~\eqref{eq:itce}.
%%%%%%%%%%%%%%%%%%%%%%

%%%%%%%%%%%%%%%%%%%%%%
Conversely, we suppose that Condition~\eqref{eq:bpnc} is satisfied.
This assumption implies Eq.~\eqref{eq:bpim}.
From Eq.~\eqref{eq:bpim} together with Inequalities~\eqref{eq:ineqp} and \eqref{eq:plet}, we have
\begin{equation}\label{eq:spin}
\eta_{x}\leqslant p_{\sf L}(\mathcal{E})\leqslant p_{\sf SEP}(\mathcal{E})=\eta_{x}.
\end{equation}
Thus, Eqs.~\eqref{eq:plq1} and \eqref{eq:itce} hold.
\end{proof}
%%%%%%%%%%%%%%%%%%%%%%

%%%%%%%%%%%%%%%%%%%%%%
From Inequality~\eqref{eq:plet} together with Eq.~\eqref{eq:plq1}, we note that Theorem~\ref{thm:plq1} provides a necessary and sufficient condition under which $p_{\sf L}(\mathcal{E})$ is equal to the largest probability of $\{\eta_{i}\}_{i=1}^{n}$. 
Theorem~\ref{thm:plq1} also leads us to the following corollary.
%%%%%%%%%%%%%%%%%%%%%%
\begin{corollary}\label{cor:qdh}
For a multi-party quantum state ensembles $\mathcal{E}=\{\eta_{i},\rho_{i}\}_{i=1}^{n}$, we have
\begin{equation}\label{eq:fron}
p_{\sf L}(\mathcal{E})=\tfrac{1}{n}
\end{equation}
if and only if 
\begin{subequations}\label{eq:frcd}
\begin{gather}
\eta_{1}=\cdots=\eta_{n}=\tfrac{1}{n},\label{eq:frcd1}\\
\rho_{1}=\cdots=\rho_{n}.\label{eq:frcd2}
\end{gather}
\end{subequations}
\end{corollary}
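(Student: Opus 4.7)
The plan is to treat the two implications of the corollary separately. For the sufficiency direction, suppose Eq.~\eqref{eq:frcd1} and Eq.~\eqref{eq:frcd2} hold and write $\rho=\rho_{1}=\cdots=\rho_{n}$. Then for any measurement $\mathcal{M}=\{M_{i}\}_{i=1}^{n}$, the average success probability is $\sum_{i=1}^{n}(1/n)\Tr(\rho M_{i})=(1/n)\Tr(\rho\mathbbm{1})=1/n$, hence $p_{\sf L}(\mathcal{E})=1/n$. This direction needs no use of Theorem~\ref{thm:plq1}.

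For the necessity direction, assume $p_{\sf L}(\mathcal{E})=1/n$. First, Inequality~\eqref{eq:plet} gives $\eta_{i}\leqslant p_{\sf L}(\mathcal{E})=1/n$ for every $i$, and together with $\sum_{i=1}^{n}\eta_{i}=1$ this forces Eq.~\eqref{eq:frcd1}. The saturation $\eta_{x}=p_{\sf L}(\mathcal{E})$ then holds for \emph{every} index $x\in\{1,\ldots,n\}$, so Theorem~\ref{thm:plq1} applies to each choice of $x$ and yields
\begin{equation}
\eta_{x}\rho_{x}-\eta_{i}\rho_{i}=\tfrac{1}{n}(\rho_{x}-\rho_{i})\in\mathbb{SEP}^{*}
\end{equation}
for all $x,i$. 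Using that $\mathbb{SEP}^{*}$ is a cone and swapping $x\leftrightarrow i$, both $\rho_{x}-\rho_{i}$ and $\rho_{i}-\rho_{x}$ lie in $\mathbb{SEP}^{*}$.

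It remains to argue $\mathbb{SEP}^{*}\cap(-\mathbb{SEP}^{*})=\{0\}$. If $A$ and $-A$ are both block positive, then Definition~\ref{def:bpo} gives $\Tr(A\sigma)=0$ for every separable state $\sigma$; since the rank-one product operators span $\mathbb{H}$, this forces $A=0$. Applying this with $A=\rho_{x}-\rho_{i}$ yields $\rho_{x}=\rho_{i}$ for all $x,i$, which is Eq.~\eqref{eq:frcd2}. The only real subtlety is recognizing that Theorem~\ref{thm:plq1} should be invoked for every admissible $x$ rather than a single one, because this symmetric use is precisely what lifts the one-sided block-positivity condition $\rho_{x}-\rho_{i}\in\mathbb{SEP}^{*}$ to the equality $\rho_{x}=\rho_{i}$.
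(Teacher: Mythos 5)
Your proof is correct, and it differs from the paper's in two small but genuine ways. For sufficiency, you compute directly that every measurement yields average success probability $\sum_{i}\tfrac{1}{n}\Tr(\rho M_{i})=\tfrac{1}{n}$; the paper instead notes $\eta_{1}\rho_{1}-\eta_{i}\rho_{i}=\mathbb{O}\in\mathbb{SEP}^{*}$ and invokes Theorem~\ref{thm:plq1}. Your route is more elementary and in fact shows the stronger fact that even $p_{\sf G}(\mathcal{E})=\tfrac{1}{n}$. For necessity, both arguments derive \eqref{eq:frcd1} from Inequality~\eqref{eq:plet} and then apply Theorem~\ref{thm:plq1} to get $\tfrac{1}{n}(\rho_{x}-\rho_{i})\in\mathbb{SEP}^{*}$, but you close the argument differently: by applying the theorem for \emph{every} index $x$ you obtain that both $\rho_{x}-\rho_{i}$ and its negative are block positive, and then use pointedness of $\mathbb{SEP}^{*}$ (which follows because pure product states span $\mathbb{H}$, so $\Tr(A\sigma)=0$ for all separable $\sigma$ forces $A=\mathbb{O}$). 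The paper instead applies the theorem for a single $x$ and cites the external lemma that a \emph{traceless} block-positive operator must be zero. Your version is self-contained and avoids the citation at the cost of the symmetric double application; the paper's is shorter but rests on a stronger imported fact. Both are sound.
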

%%%%%%%%%%%%%%%%%%%%%%
\begin{proof}
Assume that Eq.~\eqref{eq:fron} is satisfied. 
Condition~\eqref{eq:frcd1} is satisfied because $\{\eta_{i}\}_{i=1}^{n}$ is a probability distribution and 
\begin{equation}\label{eq:epln}
\eta_{i}\leqslant p_{\sf L}(\mathcal{E})
=\tfrac{1}{n}
\end{equation}
for all $i=1,\ldots,n$, where the first inequality is from Inequality~\eqref{eq:plet} and the last inequality is due to our assumption.
From Theorem~\ref{thm:plq1} together Eq.~\eqref{eq:fron} and Condition~\eqref{eq:frcd1}, we have
\begin{equation}\label{eq:rroi}
\tfrac{1}{n}\rho_{1}-\tfrac{1}{n}\rho_{i}\in\mathbb{SEP}^{*}
\end{equation}
for all $i=1,\ldots,n$. 
Inclusion~\eqref{eq:rroi} implies Condition~\eqref{eq:frcd2} since a block-positive operator is traceless if and only if it is the zero operator $\mathbb{O}$ in $\mathbb{H}$\cite{ha20225}.
%%%%%%%%%%%%%%%%%%%%%%

%%%%%%%%%%%%%%%%%%%%%%
Conversely, suppose that Condition~\eqref{eq:frcd} holds.
In this case, $\eta_{1}\rho_{1}-\eta_{i}\rho_{i}=\mathbb{O}$ for all $i=1,\ldots,n$.
Since $\mathbb{O}$ is obviously block positive, it follows from Theorem~\ref{thm:plq1} that Eq.~\eqref{eq:fron} is satisfied.
\end{proof}
%%%%%%%%%%%%%%%%%%%%%%

%%%%%%%%%%%%%%%%%%%%%%
%      Section       %
%%%%%%%%%%%%%%%%%%%%%%
\section{Quantum sequence discrimination}\label{sec:qseq}
%%%%%%%%%%%%%%%%%%%%%%
For a positive integer $L$, let us consider quantum state ensembles
\begin{equation}\label{eq:menb}
\mathcal{E}^{1}=\{\eta_{i}^{1},\rho_{i}^{1}\}_{i=1}^{n_{1}},\ldots,
\mathcal{E}^{L}=\{\eta_{i}^{L},\rho_{i}^{L}\}_{i=1}^{n_{L}},
\end{equation}
where each $\mathcal{E}^{l}$ consists of $n_{l}$ states $\rho_{1}^{l},\ldots,\rho_{n_{l}}^{l}$ with preparation probabilities $\eta_{1}^{l},\ldots,\eta_{n_{l}}^{l}$, for $l=1,\ldots,L$.
Here, we consider the situation that a sequence of quantum states is prepared from the ensembles in Eq.~\eqref{eq:menb};
initially, a state $\rho_{c_{1}}^{1}$ is prepared from the ensemble $\mathcal{E}^{1}$ with the corresponding probability $\eta_{c_{1}}^{1}$. Subsequently, at the $l$th instance for $l=2,\ldots,L$, a state $\rho_{c_{l}}^{l}$ is prepared from the ensemble $\mathcal{E}^{l}$ with the corresponding probability $\eta_{c_{l}}^{l}$.
As a result, a quantum sequence 
\begin{equation}\label{eq:qsdf}
(\rho_{c_{1}}^{1},\ldots,\rho_{c_{L}}^{L})
\end{equation}
is prepared with the probability 
\begin{equation}\label{eq:spdf}
\eta_{c_{1}}^{1}\times\cdots\times\eta_{c_{L}}^{L}.
\end{equation}
%%%%%%%%%%%%%%%%%%%%%%

%%%%%%%%%%%%%%%%%%%%%%
Let us denote $\mathbb{N}_{n_{l}}$ as the set of all integers from $1$ to $n_{l}$ for each $l=1,\ldots,L$. By using the notion
\begin{equation}\label{eq:vecn}
\vec{n}=(n_{1},\ldots,n_{L}),
\end{equation}
we denote $\mathbb{N}_{\vec{n}}$ as the Cartesian product of $\mathbb{N}_{n_{1}},\ldots,\mathbb{N}_{n_{L}}$, that is,
\begin{equation}\label{eq:cpzn}
\mathbb{N}_{\vec{n}}=\mathbb{N}_{n_{1}}\times\cdots\times\mathbb{N}_{n_{L}}.
\end{equation}
The quantum sequence in Eq.~\eqref{eq:qsdf} together with the probability in Eq.~\eqref{eq:spdf} can be considered as the tensor-producted state $\rho_{\vec{c}}$ with the probability $\eta_{\vec{c}}$,
\begin{equation}\label{eq:spvc}
\eta_{\vec{c}}=\prod_{l=1}^{L}\eta_{c_{l}}^{l},~~
\rho_{\vec{c}}=\bigotimes_{l=1}^{L}\rho_{c_{l}}^{l},
\end{equation}
for $\vec{c}=(c_{1},\ldots,c_{L})\in\mathbb{N}_{\vec{n}}$, from the tensor product of the ensembles in Eq.~\eqref{eq:menb},
\begin{equation}\label{eq:seqe}
\bigotimes_{l=1}^{L}\mathcal{E}^{l}=\{\eta_{\vec{c}},\rho_{\vec{c}}\}_{\vec{c}\in\mathbb{N}_{\vec{n}}}.
\end{equation}
%%%%%%%%%%%%%%%%%%%%%%

%%%%%%%%%%%%%%%%%%%%%%
In discriminating the quantum sequences from $\bigotimes_{l=1}^{L}\mathcal{E}^{l}$ in Eq.~\eqref{eq:seqe}, we use a measurement 
\begin{equation}\label{eq:dcrm}
\mathcal{M}=\{M_{\vec{c}}\}_{\vec{c}\in\mathbb{N}_{\vec{n}}},
\end{equation}
where each detection of $M_{\vec{c}}$ means that the prepared quantum sequence is guessed to be $\rho_{\vec{c}}$, for $\vec{c}\in\mathbb{N}_{\vec{n}}$.
In this case, the average probability of correctly guessing the prepared quantum sequence is
\begin{equation}\label{eq:pgle}
\sum_{\vec{c}\in\mathbb{N}_{\vec{n}}}\eta_{\vec{c}}\Tr(\rho_{\vec{c}}M_{\vec{c}}).
\end{equation}
The minimum-error discrimination of quantum sequence ensemble $\bigotimes_{l=1}^{L}\mathcal{E}^{l}$ is to maximize the average success probability in Eq.~\eqref{eq:pgle}, that is,
\begin{equation}\label{eq:pgsd}
p_{\sf G}\left(\bigotimes_{l=1}^{L}\mathcal{E}^{l}\right)=\max_{\mathcal{M}}\sum_{\vec{c}\in\mathbb{N}_{\vec{n}}}\eta_{\vec{c}}\Tr(\rho_{\vec{c}}M_{\vec{c}}),
\end{equation}
where the maximum is taken over all possible measurements $\mathcal{M}$ in Eq.~\eqref{eq:dcrm}.
%%%%%%%%%%%%%%%%%%%%%%

%%%%%%%%%%%%%%%%%%%%%%
Now, let us consider the situation that a measurement is performed at each step of quantum sequence; a measurement $\{M_{i}^{l}\}_{i=1}^{n_{l}}$ is performed at the $l$th step of the quantum sequence for $l=1,\ldots,L$. 
In this situation, the measurement operator $M_{\vec{c}}$ in Eq.~\eqref{eq:dcrm} can be represented as 
\begin{equation}\label{eq:mvcd}
M_{\vec{c}}=M_{c_{1}}^{1}\otimes\cdots\otimes M_{c_{L}}^{L}
\end{equation}
for each $\vec{c}=(c_{1},\ldots,c_{L})\in\mathbb{N}_{\vec{n}}$, therefore we have
\begin{equation}\label{eq:pgrw}
\sum_{\vec{c}\in\mathbb{N}_{\vec{n}}}\eta_{\vec{c}}\Tr(\rho_{\vec{c}}M_{\vec{c}})
=\prod_{l=1}^{L}\sum_{i=1}^{n_{l}}\eta_{i}^{l}\Tr(\rho_{i}^{l}M_{i}^{l}).
\end{equation}
In other words, Eq.~\eqref{eq:pgle} can be expressed as the product of average success probabilities for discriminating each ensemble $\mathcal{E}^{l}$ with respect to the measurement $\{M_{i}^{l}\}_{i=1}^{n_{l}}$, for $l=1,\ldots,L$.
%%%%%%%%%%%%%%%%%%%%%%

%%%%%%%%%%%%%%%%%%%%%%
Interestingly, it was recently shown that the maximum in Eq.~\eqref{eq:pgsd} can be achieved just by performing minimum-error discrimination independently at each step of the quantum sequence, that is,
\begin{equation}\label{eq:pgpd}
p_{\sf G}\left(\bigotimes_{l=1}^{L}\mathcal{E}^{l}\right)=\prod_{l=1}^{L}p_{\sf G}(\mathcal{E}^{l}).
\end{equation}
In other words, the minimum-error discrimination of $\bigotimes_{l=1}^{L}\mathcal{E}^{l}$ can always be factorized into that of each individual ensemble $\mathcal{E}^{l}$, for $l=1,\ldots,L$\cite{gupt20242}.
%%%%%%%%%%%%%%%%%%%%%%

%%%%%%%%%%%%%%%%%%%%%%
%      Section       %
%%%%%%%%%%%%%%%%%%%%%%
\section{LOCC discrimination of multi-party quantum sequences}\label{sec:ldmqs}
%%%%%%%%%%%%%%%%%%%%%%

%%%%%%%%%%%%%%%%%%%%%%
%    Subsection      %
%%%%%%%%%%%%%%%%%%%%%%
\subsection{Multi-party quantum sequence discrimination }\label{ssec:mqsq}
%%%%%%%%%%%%%%%%%%%%%%
In this section, we consider the case when the ensembles $\mathcal{E}^{1},\ldots,\mathcal{E}^{L}$ in Eq.~\eqref{eq:menb} consist of \emph{multi-party} quantum states;
each quantum state $\rho_{c_{l}}^{l}$ of the quantum sequence in Eq.~\eqref{eq:qsdf} is acting on a $m$-party Hilbert space, for $c_{l}=1,\ldots,n_{l}$ and $l=1,\ldots,L$.
Here, each $m$-party state $\rho_{c_{l}}^{l}$ of the quantum sequence is shared to $m$ parties $\mathsf{A}_{1},\ldots,\mathsf{A}_{m}$.
Figure~\ref{fig:febe} illustrates the process of preparing a quantum sequence from the $m$-party quantum sequence ensemble $\bigotimes_{l=1}^{L}\mathcal{E}^{l}$ in Eq.~\eqref{eq:seqe} and sharing it to the $m$ parties, $\mathsf{A}_{1},\ldots,\mathsf{A}_{m}$.
%%%%%%%%%%%%%%%%%%%%%%

%%%%%%%%%%%%%%%%%%%%%%
\begin{figure}[!tt]
\centerline{\includegraphics[scale=1.0]{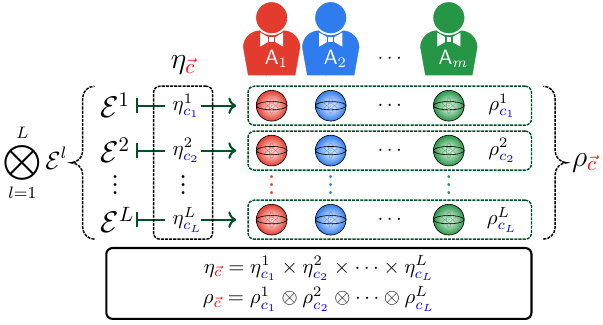}}
\caption{Multi-party quantum sequence ensemble $\bigotimes_{l=1}^{L}\mathcal{E}^{l}=\{\eta_{\vec{c}},\rho_{\vec{c}}\}_{\vec{c}\in\mathbb{N}_{\vec{n}}}$.
The $m$ parties, $\mathsf{A}_{1},\ldots,\mathsf{A}_{m}$, share a sequence of $L$ multi-party quantum states, $(\rho_{c_{1}}^{1},\ldots,\rho_{c_{L}}^{L})$. 
For each $l=1,\ldots,L$, the $l$th state $\rho_{c_{l}}^{l}$ is prepared from the ensemble $\mathcal{E}^{l}$ with the probability $\eta_{c_{l}}^{l}$, for $c_{l}=1,\ldots,n_{l}$.
The quantum sequence $(\rho_{c_{1}}^{1},\ldots,\rho_{c_{L}}^{L})$ can be regarded as the tensor producted state $\rho_{\vec{c}}$ with the probability $\eta_{\vec{c}}$, for $\vec{c}=(c_{1},\ldots,c_{L})\in\mathbb{N}_{\vec{n}}$. 
}\label{fig:febe}
\end{figure}
%%%%%%%%%%%%%%%%%%%%%%

%%%%%%%%%%%%%%%%%%%%%%
In this framework, we adopt the concepts of separability and block-positivity as defined in Eqs.~\eqref{eq:sep} and \eqref{eq:bpo}, with respect to $\mathsf{A}_{1},\ldots,\mathsf{A}_{m}$.
In other words, we use the sets $\mathbb{SEP}$ and $\mathbb{SEP}^{*}$ in Eqs.~\eqref{eq:sepd} and \eqref{eq:bpdf} to represent separability and block-positivity, respectively, with respect to $\mathsf{A}_{1},\ldots,\mathsf{A}_{m}$.
Accordingly, a measurement is classified as a separable measurement if each measurement operator is separable with respect to $\mathsf{A}_{1},\ldots,\mathsf{A}_{m}$.
Likewise, a measurement is considered as an LOCC measurement if it can be realized by LOCC among $\mathsf{A}_{1},\ldots,\mathsf{A}_{m}$.
%%%%%%%%%%%%%%%%%%%%%%

%%%%%%%%%%%%%%%%%%%%%%
In discriminating the quantum sequences from $\bigotimes_{l=1}^{L}\mathcal{E}^{l}$, if the measurements $\mathcal{M}$ in Eq.~\eqref{eq:dcrm} are limited to separable measurements, we denote the maximum of average success probability in Eq.~\eqref{eq:pgle} as
\begin{equation}\label{eq:mpsp}
p_{\sf SEP}\left(\bigotimes_{l=1}^{L}\mathcal{E}^{l}\right)=\max_{\textsf{Separable}\,\mathcal{M}}\sum_{\vec{c}\in\mathbb{N}_{\vec{n}}}\eta_{\vec{c}}\Tr(\rho_{\vec{c}}M_{\vec{c}}).
\end{equation}
Similarly, if the measurements $\mathcal{M}$ in Eq.~\eqref{eq:dcrm} are restricted to LOCC measurements, we denote
\begin{equation}\label{eq:mpel}
p_{\sf L}\left(\bigotimes_{l=1}^{L}\mathcal{E}^{l}\right)=\max_{\textsf{LOCC}\,\mathcal{M}}\sum_{\vec{c}\in\mathbb{N}_{\vec{n}}}\eta_{\vec{c}}\Tr(\rho_{\vec{c}}M_{\vec{c}}).
\end{equation}
%%%%%%%%%%%%%%%%%%%%%%

%%%%%%%%%%%%%%%%%%%%%%
Since performing a separable measurement independently at each step of the quantum sequence can be regarded as a separable measurement for the quantum sequence, it follows from Eq.~\eqref{eq:pgrw} together with the definitions in Eqs.~\eqref{eq:psep} and \eqref{eq:mpsp} that 
\begin{equation}\label{eq:pseq}
p_{\sf SEP}\left(\bigotimes_{l=1}^{L}\mathcal{E}^{l}\right)
\geqslant \prod_{l=1}^{L}p_{\sf SEP}(\mathcal{E}^{l}).
\end{equation}
Due to the similar reason, we also have
\begin{equation}\label{eq:pleq}
p_{\sf L}\left(\bigotimes_{l=1}^{L}\mathcal{E}^{l}\right)
\geqslant \prod_{l=1}^{L}p_{\sf L}(\mathcal{E}^{l}).
\end{equation}
%%%%%%%%%%%%%%%%%%%%%%

%%%%%%%%%%%%%%%%%%%%%%
Inequality~\eqref{eq:pleq} is saturated if and only if the maximum in Eq.~\eqref{eq:mpel} can be achieved just by performing optimal LOCC discrimination independently at each step of the quantum sequence. That is, the optimal LOCC discrimination of $\bigotimes_{l=1}^{L}\mathcal{E}^{l}$ can be factorized into that of each individual ensemble $\mathcal{E}^{l}$, for $l=1,\ldots,L$.
In the following definition, we introduce the concept of factorizability for the optimal LOCC discrimination of $\bigotimes_{l=1}^{L}\mathcal{E}^{l}$.
%%%%%%%%%%%%%%%%%%%%%%

%%%%%%%%%%%%%%%%%%%%%%
\begin{definition}\label{def:fact}
For a multi-party quantum sequence ensemble $\bigotimes_{l=1}^{L}\mathcal{E}^{l}$ in Eq.~\eqref{eq:seqe}, we say that the optimal LOCC discrimination of $\bigotimes_{l=1}^{L}\mathcal{E}^{l}$ is \emph{factorizable} if
\begin{equation}\label{eq:didp}
p_{\sf L}\left(\bigotimes_{l=1}^{L}\mathcal{E}^{l}\right)=\prod_{l=1}^{L}p_{\sf L}(\mathcal{E}^{l}).
\end{equation}
\end{definition}
%%%%%%%%%%%%%%%%%%%%%%

%%%%%%%%%%%%%%%%%%%%%%
%    Subsection      %
%%%%%%%%%%%%%%%%%%%%%%
\subsection{Factorizable LOCC discrimination of quantum sequences}\label{ssec:fldqs}
%%%%%%%%%%%%%%%%%%%%%%
In this subsection, we present our main results, providing conditions under which the optimal LOCC discrimination of a multi-party quantum sequence ensemble $\bigotimes_{l=1}^{L}\mathcal{E}^{l}$ in Eq.~\eqref{eq:seqe} becomes factorizable, that is, Eq.~\eqref{eq:didp} holds.
These conditions are established based on the following relation obtained from Inequalities~\eqref{eq:ineqp} and \eqref{eq:plet},
\begin{equation}\label{eq:grin}
\max_{\vec{c}\in\mathbb{N}_{\vec{n}}}\eta_{\vec{c}}\leqslant p_{\sf L}\left(\bigotimes_{l=1}^{L}\mathcal{E}^{l}\right)
\leqslant p_{\sf SEP}\left(\bigotimes_{l=1}^{L}\mathcal{E}^{l}\right)
\leqslant p_{\sf G}\left(\bigotimes_{l=1}^{L}\mathcal{E}^{l}\right).
\end{equation}
The first case we consider is when the first inequality in \eqref{eq:grin} is saturated. 
In this case, we show that Eq.~\eqref{eq:didp} holds.
Moreover, we further show that saturation of the first inequality naturally implies saturation of the second inequality in \eqref{eq:grin}.
%Moreover, the saturation of the second inequality in \eqref{eq:grin} is also guaranteed.
The second case we consider is when Eq.~\eqref{eq:didp} holds and both of the last two inequalities in \eqref{eq:grin} are saturated.
In this case, we show that the minimum-error discrimination of each individual ensemble $\mathcal{E}^{l}$ can be realized by LOCC, that is, $p_{\sf L}(\mathcal{E}^{l})=p_{\sf G}(\mathcal{E}^{l})$ for all $l=1,\ldots,L$.
We further show that the converse is also true.
Finally, we provide a necessary and sufficient condition satisfying both Eq.~\eqref{eq:didp} and saturation of the second inequality in \eqref{eq:grin}.
%%%%%%%%%%%%%%%%%%%%%%

%%%%%%%%%%%%%%%%%%%%%%
We start with the following theorem stating that if the first inequality in \eqref{eq:grin} is saturated, then the optimal LOCC discrimination of $\bigotimes_{l=1}^{L}\mathcal{E}^{l}$ is factorizable and the second inequality in \eqref{eq:grin} is also saturated.
%%%%%%%%%%%%%%%%%%%%%%
\begin{theorem}\label{thm:mpen}
For a multi-party quantum sequence ensemble $\bigotimes_{l=1}^{L}\mathcal{E}^{l}$ in Eq.~\eqref{eq:seqe} and $\vec{x}\in\mathbb{N}_{\vec{n}}$, we have 
\begin{equation}\label{eq:plon}
p_{\sf L}\left(\bigotimes_{l=1}^{L}\mathcal{E}^{l}\right)=\eta_{\vec{x}}
\end{equation}
if and only if
\begin{equation}\label{eq:nsbc}
\eta_{\vec{x}}\rho_{\vec{x}}-\eta_{\vec{c}}\rho_{\vec{c}}\in\mathbb{SEP}^{*}
\end{equation}
for all $\vec{c}\in\mathbb{N}_{\vec{n}}$.
In this case, we have
\begin{equation}\label{eq:else}
p_{\sf L}\left(\bigotimes_{l=1}^{L}\mathcal{E}^{l}\right)
=p_{\sf SEP}\left(\bigotimes_{l=1}^{L}\mathcal{E}^{l}\right)
\end{equation}
and the optimal LOCC discrimination of $\bigotimes_{l=1}^{L}\mathcal{E}^{l}$ is factorizable, that is, Eq.~\eqref{eq:didp} holds.
\end{theorem}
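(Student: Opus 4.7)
The plan is to treat the sequence ensemble $\bigotimes_{l=1}^{L}\mathcal{E}^{l}$ as a single multi-party quantum state ensemble on $\mathcal{H}^{\otimes L}$, partitioned among the $m$ parties $\mathsf{A}_{1},\ldots,\mathsf{A}_{m}$ (each of whom now holds $L$ subsystems), re-indexed by the flat label set $\mathbb{N}_{\vec{n}}$. With this identification, the cone $\mathbb{SEP}^{*}$ appearing in Condition~\eqref{eq:nsbc} is exactly the block-positive cone on $\mathcal{H}^{\otimes L}$ with respect to $\mathsf{A}_{1},\ldots,\mathsf{A}_{m}$, precisely as adopted in Sec.~\ref{ssec:mqsq}, so Theorem~\ref{thm:plq1} applies verbatim to this flattened ensemble.

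I would first obtain the biconditional between Eq.~\eqref{eq:plon} and Condition~\eqref{eq:nsbc}, together with Eq.~\eqref{eq:else}, by invoking Theorem~\ref{thm:plq1} with the designated $\vec{x}\in\mathbb{N}_{\vec{n}}$ playing the role of $x$. Both directions of the equivalence, as well as the coincidence $p_{\sf L}=p_{\sf SEP}$ in the saturated case, come bundled together in that result, so no further work is needed at this stage beyond remarking on the re-indexing.

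What remains is the factorizability claim~\eqref{eq:didp}. Assuming Eq.~\eqref{eq:plon}, I would expand $\eta_{\vec{x}}=\prod_{l=1}^{L}\eta_{x_{l}}^{l}$ and set up the sandwich
\begin{equation*}
\eta_{\vec{x}} = p_{\sf L}\!\left(\bigotimes_{l=1}^{L}\mathcal{E}^{l}\right) \geqslant \prod_{l=1}^{L}p_{\sf L}(\mathcal{E}^{l}) \geqslant \prod_{l=1}^{L}\eta_{x_{l}}^{l} = \eta_{\vec{x}},
\end{equation*}
where the first inequality is Inequality~\eqref{eq:pleq} and the second is Inequality~\eqref{eq:plet} applied to each individual ensemble $\mathcal{E}^{l}$. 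Collapse of the whole chain to equalities delivers Eq.~\eqref{eq:didp}.

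There is essentially no serious obstacle: the argument is a direct reduction to Theorem~\ref{thm:plq1} combined with the already-recorded inequalities~\eqref{eq:plet} and \eqref{eq:pleq}. The only point that requires a brief sanity check is that $\mathbb{SEP}^{*}$ in Condition~\eqref{eq:nsbc} really refers to block-positivity on the enlarged space with the same $m$-party partition, which is immediate from the convention fixed at the start of Sec.~\ref{ssec:mqsq}.
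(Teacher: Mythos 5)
Your proposal is correct and follows essentially the same route as the paper: the equivalence and Eq.~\eqref{eq:else} are obtained by applying Theorem~\ref{thm:plq1} to the sequence ensemble viewed as a single multi-party ensemble, and the factorizability follows from the same sandwich $\eta_{\vec{x}}\leqslant\prod_{l}p_{\sf L}(\mathcal{E}^{l})\leqslant p_{\sf L}\bigl(\bigotimes_{l}\mathcal{E}^{l}\bigr)=\eta_{\vec{x}}$ built from Inequalities~\eqref{eq:plet} and~\eqref{eq:pleq}. No gaps.
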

%%%%%%%%%%%%%%%%%%%%%%

%%%%%%%%%%%%%%%%%%%%%%
\begin{proof}
The first statement can be directly derived from Theorem~\ref{thm:plq1}.
The second statement is true due to Theorem~\ref{thm:plq1} together with
\begin{equation}\label{eq:corp}
\eta_{\vec{x}}\leqslant \prod_{l=1}^{L}p_{\sf L}(\mathcal{E}^{l})\leqslant p_{\sf L}\left(\bigotimes_{l=1}^{L}\mathcal{E}^{l}\right)=\eta_{\vec{x}},
\end{equation}
where the first two inequalities are from Inequalities~\eqref{eq:plet} and \eqref{eq:pleq}, respectively, and the last equality is by Eq.~\eqref{eq:plon}.
\end{proof}
%%%%%%%%%%%%%%%%%%%%%%

%%%%%%%%%%%%%%%%%%%%%%
Although Theorem~\ref{thm:mpen} provides a sufficient condition for the validity of Eq.~\eqref{eq:didp}, we note that Eq.~\eqref{eq:didp} does not generally hold, in contrast with Eq.~\eqref{eq:pgpd}.
That is, there exist ensembles $\mathcal{E}^{1},\ldots,\mathcal{E}^{L}$ violating Eq.~\eqref{eq:didp}, as illustrated in the following example.
%%%%%%%%%%%%%%%%%%%%%%

%%%%%%%%%%%%%%%%%%%%%%
\begin{example}\label{ex:pleo}
For integers $m,d,L\geqslant2$, let us consider the ensembles $\mathcal{E}^{1},\ldots,\mathcal{E}^{L}$, each identical to the $m$-qu$d$it state ensemble $\mathcal{E}=\{\eta_{1},\rho_{1};\eta_{2},\rho_{2}\}$ with
\begin{align}\label{eq:exer}
\eta_{1}=\tfrac{2d^{m}}{d+3d^{m}},~&\rho_{1}=\tfrac{1}{2d^{m}}(\mathbbm{1}_{d}^{m}+d^{m}\Phi_{d}^{m}),\nonumber\\
\eta_{2}=\tfrac{d+d^{m}}{d+3d^{m}},~&\rho_{2}=\Phi_{d}^{m}
\end{align}
where $\mathbbm{1}_{d}^{m}$ and $\Phi_{d}^{m}$ are the $m$-qu$d$it identity operator and \emph{Greenberger–Horne–Zeilinger}(GHZ) state, respectively\cite{gree1989},
\begin{eqnarray}
\mathbbm{1}_{d}^{m}&=&\sum_{i_{1},\ldots,i_{m}=0}^{d-1}\ket{i_{1}\cdots i_{m}}_{\mathsf{A}_{1}\cdots\mathsf{A}_{m}}\!\bra{i_{1}\cdots i_{m}},\nonumber\\
\Phi_{d}^{m}&=&\frac{1}{d}\sum_{i,j=0}^{d-1}\ket{i\cdots i}_{\mathsf{A}_{1}\cdots\mathsf{A}_{m}}\!\bra{j\cdots j}.\label{eq:maxe}
\end{eqnarray}
\end{example}
%%%%%%%%%%%%%%%%%%%%%%

%%%%%%%%%%%%%%%%%%%%%%
To show the invalidity of Eq.~\eqref{eq:didp} for the multi-party quantum sequence ensemble $\bigotimes_{l=1}^{L}\mathcal{E}^{l}$ in Example~\ref{ex:pleo}, we show that $\bigotimes_{l=1}^{L}\mathcal{E}^{l}$ satisfies
\begin{equation}\label{eq:sefc}
\prod_{l=1}^{L}p_{\sf L}(\mathcal{E}^{l})=\eta_{\vec{1}}
<p_{\sf L}\left(\bigotimes_{l=1}^{L}\mathcal{E}^{l}\right)
\end{equation}
where $\vec{1}\in\mathbb{N}_{\vec{n}}$ is the vector with all entries being $1$, that is,
\begin{equation}\label{eq:vone}
\vec{1}=(1,\ldots,1).
\end{equation}
We verify the equality and the inequality in Eq.~\eqref{eq:sefc} separately.
%%%%%%%%%%%%%%%%%%%%%%

%%%%%%%%%%%%%%%%%%%%%%
For the equality in Eq.~\eqref{eq:sefc}, we recall the block-positivity
\begin{equation}\label{eq:idps}
\mathbbm{1}_{d}^{m}-d\Phi_{d}^{m}\in\mathbb{SEP}^{*}
\end{equation}
for any $m,d\geqslant2$\cite{ha20231}.
Since
\begin{equation}\label{eq:exbl}
\eta_{1}\rho_{1}-\eta_{2}\rho_{2}=\tfrac{1}{d+3d^{m}}(\mathbbm{1}_{d}^{m}-d\Phi_{d}^{m})
\end{equation}
for the ensemble $\mathcal{E}$ in Eq.~\eqref{eq:exer}, it follows from Theorem~\ref{thm:plq1} and Inclusion~\eqref{eq:idps} that
\begin{equation}\label{eq:eplo}
p_{\sf L}(\mathcal{E})=\eta_{1}=\tfrac{2d^{m}}{d+3d^{m}}.
\end{equation}
Thus, Eq.~\eqref{eq:eplo} and the identical structure of $\mathcal{E}^{1},\ldots,\mathcal{E}^{L}$ in Example~\ref{ex:pleo} lead us to the equality in Eq.~\eqref{eq:sefc}.
%%%%%%%%%%%%%%%%%%%%%%

%%%%%%%%%%%%%%%%%%%%%%
For the inequality in Eq.~\eqref{eq:sefc}, we show that the quantum sequence ensemble $\bigotimes_{l=1}^{L}\mathcal{E}^{l}$ in Example~\ref{ex:pleo} does not satisfy 
\begin{equation}\label{eq:nbpp}
\eta_{\vec{1}}\rho_{\vec{1}}-\eta_{\vec{c}}\rho_{\vec{c}}\notin\mathbb{SEP}^{*}
\end{equation}
for the vector $\vec{c}\in\mathbb{N}_{\vec{n}}$ whose first entry is $2$ and the remaining entries are $1$, that is,
\begin{equation}\label{eq:vtdf}
\vec{c}=(2,1,\ldots,1).
\end{equation}
From Eqs.~\eqref{eq:exer} and \eqref{eq:exbl}, the left-hand side of Exclusion~\eqref{eq:nbpp} can be rewritten as
\begin{eqnarray}\label{eq:svdr}
\eta_{\vec{1}}\rho_{\vec{1}}-\eta_{\vec{c}}\rho_{\vec{c}}
&=&(\eta_{1}^{1}\rho_{1}^{1}-\eta_{2}^{1}\rho_{2}^{1})\otimes
\eta_{1}^{2}\rho_{1}^{2}\otimes\cdots\otimes\eta_{1}^{L}\rho_{1}^{L}\nonumber\\
&=&\tfrac{1}{(d+3d^{m})^{L}}(\mathbbm{1}_{d}^{m}-d\Phi_{d}^{m})\otimes(\mathbbm{1}_{d}^{m}+d^{m}\Phi_{d}^{m})\otimes\cdots\otimes(\mathbbm{1}_{d}^{m}+d^{m}\Phi_{d}^{m}).
\end{eqnarray}
Since the tensor product of a non-block-positive operator and a positive-semidefinite operator is not block positive, it follows from Eq.~\eqref{eq:svdr} that Exclusion~\eqref{eq:nbpp} is satisfied if 
\begin{equation}\label{eq:sbpp}
(\mathbbm{1}_{d}^{m}-d\Phi_{d}^{m})\otimes(\mathbbm{1}_{d}^{m}+d^{m}\Phi_{d}^{m})\notin\mathbb{SEP}^{*}.
\end{equation}
%%%%%%%%%%%%%%%%%%%%%%

%%%%%%%%%%%%%%%%%%%%%%
For the product state $\sigma$ which is the tensor product of $m$ copies of $2$-qu$d$it GHZ states $\Phi_{d}^{2}$, that is,
\begin{eqnarray}
\sigma&=&(\Phi_{d}^{2})_{\sf A_{1}}\otimes\cdots\otimes(\Phi_{d}^{2})_{\sf A_{m}}\nonumber\\
&=&\frac{1}{d^{m}}\sum_{i_{1},j_{1},\ldots,i_{m},j_{m}=0}^{d-1}
\ket{i_{1}\cdots i_{m}}_{\mathsf{A}_{1}\cdots\mathsf{A}_{m}}\!\bra{j_{1}\cdots j_{m}}\otimes
\ket{i_{1}\cdots i_{m}}_{\mathsf{A}_{1}\cdots\mathsf{A}_{m}}\!\bra{j_{1}\cdots j_{m}},
\label{eq:dses}
\end{eqnarray}
it is straightforward to verify
\begin{eqnarray}
\Tr[\sigma(\Phi_{d}^{m}\otimes\mathbbm{1}_{d}^{m})]
&=&\frac{1}{d}\sum_{i,j,k=0}^{d-1}
\Tr\Big[\sigma\Big(
\ket{i\cdots i}_{\mathsf{A}_{1}\cdots\mathsf{A}_{m}}\!\bra{j\cdots j}\otimes
\ket{k\cdots k}_{\mathsf{A}_{1}\cdots\mathsf{A}_{m}}\!\bra{k\cdots k}\Big)\Big]
=\frac{1}{d^{m}},\nonumber\\
\Tr[\sigma(\mathbbm{1}_{d}^{m}\otimes\Phi_{d}^{m})]
&=&\frac{1}{d}\sum_{i,j,k=0}^{d-1}
\Tr\Big[\sigma\Big(
\ket{k\cdots k}_{\mathsf{A}_{1}\cdots\mathsf{A}_{m}}\!\bra{k\cdots k}\otimes
\ket{i\cdots i}_{\mathsf{A}_{1}\cdots\mathsf{A}_{m}}\!\bra{j\cdots j}\Big)\Big]
=\frac{1}{d^{m}},\nonumber\\
\Tr[\sigma(\Phi_{d}^{m}\otimes\Phi_{d}^{m})]
&=&\frac{1}{d^{2}}\sum_{i,j,i',j'=0}^{d-1}
\Tr\Big[\sigma\Big(
\ket{i\cdots i}_{\mathsf{A}_{1}\cdots\mathsf{A}_{m}}\!\bra{j\cdots j}\otimes
\ket{i'\cdots i'}_{\mathsf{A}_{1}\cdots\mathsf{A}_{m}}\!\bra{j'\cdots j'}\Big)\Big]
=\frac{1}{d^{m}}.
\label{eq:sfvt}
\end{eqnarray}
From Eq.~\eqref{eq:sfvt}, we have
\begin{eqnarray}\label{eq:tvsv}
\Tr\big[\sigma \big((\mathbbm{1}_{d}^{m}-d\Phi_{d}^{m})\otimes(\mathbbm{1}_{d}^{m}+d^{m}\Phi_{d}^{m})\big)\big]
&=&1+\tfrac{1}{d^{m}}(d^{m}-d-d^{m+1}) \nonumber\\
&=&2-d-\tfrac{1}{d^{m-1}}<0.
\end{eqnarray}
Since Inequality~\eqref{eq:tvsv} implies Exclusion~\eqref{eq:sbpp}, Exclusion~\eqref{eq:nbpp} is satisfied. 
Thus, Exclusion~\eqref{eq:nbpp} and Theorem~\ref{thm:mpen} lead us to the inequality in Eq.~\eqref{eq:sefc}.
Now, Eq.~\eqref{eq:sefc} is satisfied for the quantum sequence ensemble $\bigotimes_{l=1}^{L}\mathcal{E}^{l}$ in Example~\ref{ex:pleo}, therefore Eq.~\eqref{eq:didp} does not hold.
In other words, the optimal LOCC discrimination of $\bigotimes_{l=1}^{L}\mathcal{E}^{l}$ is not factorizable.
%%%%%%%%%%%%%%%%%%%%%%

%%%%%%%%%%%%%%%%%%%%%%
We further remark that Eq.~\eqref{eq:plon} in Theorem~\ref{thm:mpen} guarantees 
\begin{equation}\label{eq:plof}
p_{\sf L}(\mathcal{E}^{l})=\eta_{x_{l}}^{l}
\end{equation}
for all $l=1,\ldots,L$, where $x_{l}$ is the $l$th entry of $\vec{x}=(x_{1},\ldots,x_{L})$, otherwise Inequality~\eqref{eq:corp} becomes strict.
On the other hand, we can see from Example~\ref{ex:pleo} that 
the converse is not true in general; Eq.~\eqref{eq:plon} is not necessarily guaranteed just by the validity of Eq.~\eqref{eq:plof} for all $l=1,\ldots,L$.
%%%%%%%%%%%%%%%%%%%%%%

%%%%%%%%%%%%%%%%%%%%%%
The following theorem shows that the optimal LOCC discrimination of $\bigotimes_{l=1}^{L}\mathcal{E}^{l}$ is factorizable and both of the last two inequalities in \eqref{eq:grin} are saturated if and only if the minimum-error discrimination of each individual ensemble $\mathcal{E}^{l}$ can be realized by LOCC, for $l=1,\ldots,L$.
%%%%%%%%%%%%%%%%%%%%%%
\begin{theorem}\label{thm:scfl}
For a multi-party quantum sequence ensemble $\bigotimes_{l=1}^{L}\mathcal{E}^{l}$ in Eq.~\eqref{eq:seqe}, we have
\begin{equation}\label{eq:plpg}
\prod_{l=1}^{L}p_{\sf L}(\mathcal{E}^{l})=p_{\sf L}\left(\bigotimes_{l=1}^{L}\mathcal{E}^{l}\right)=p_{\sf G}\left(\bigotimes_{l=1}^{L}\mathcal{E}^{l}\right)
\end{equation}
if and only if 
\begin{equation}\label{eq:lftc}
p_{\sf L}(\mathcal{E}^{l})=p_{\sf G}(\mathcal{E}^{l})
\end{equation}
for all $l=1,\ldots,L$.
\end{theorem}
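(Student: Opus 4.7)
The plan is to prove both directions by comparing two chains of (in)equalities: the general chain $p_{\sf L}\leqslant p_{\sf SEP}\leqslant p_{\sf G}$ applied to the composite ensemble $\bigotimes_{l=1}^{L}\mathcal{E}^{l}$, and the factorization chain $\prod_{l}p_{\sf L}(\mathcal{E}^{l})\leqslant p_{\sf L}(\bigotimes_{l}\mathcal{E}^{l})$ from Inequality~\eqref{eq:pleq}. The crucial input is Eq.~\eqref{eq:pgpd}, which asserts that minimum-error discrimination \emph{always} factorizes: $p_{\sf G}(\bigotimes_{l}\mathcal{E}^{l})=\prod_{l}p_{\sf G}(\mathcal{E}^{l})$. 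Stringing these together yields
\begin{equation*}
\prod_{l=1}^{L}p_{\sf L}(\mathcal{E}^{l})\leqslant p_{\sf L}\left(\bigotimes_{l=1}^{L}\mathcal{E}^{l}\right)\leqslant p_{\sf G}\left(\bigotimes_{l=1}^{L}\mathcal{E}^{l}\right)=\prod_{l=1}^{L}p_{\sf G}(\mathcal{E}^{l}),
\end{equation*}
and the theorem essentially reduces to reading off when the outer endpoints of this sandwich coincide.

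For the sufficient direction, assume $p_{\sf L}(\mathcal{E}^{l})=p_{\sf G}(\mathcal{E}^{l})$ for every $l$. Taking the product over $l$ makes the two outer endpoints of the displayed chain equal, which forces every intermediate term to collapse to the same value; in particular Eq.~\eqref{eq:plpg} holds. For the necessary direction, assume Eq.~\eqref{eq:plpg}. Then $\prod_{l}p_{\sf L}(\mathcal{E}^{l})=\prod_{l}p_{\sf G}(\mathcal{E}^{l})$ by substituting the factorization of $p_{\sf G}$. Because each factor satisfies $0<p_{\sf L}(\mathcal{E}^{l})\leqslant p_{\sf G}(\mathcal{E}^{l})$, where positivity follows from Inequality~\eqref{eq:plet} together with the assumption that every preparation probability $\eta_{i}^{l}$ is nonzero, I can divide factorwise (or, equivalently, argue that a product of inequalities $a_{l}\leqslant b_{l}$ with positive entries is an equality only if each $a_{l}=b_{l}$) to conclude $p_{\sf L}(\mathcal{E}^{l})=p_{\sf G}(\mathcal{E}^{l})$ for all $l$.

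Because both directions rest on Eq.~\eqref{eq:pgpd} and elementary order reasoning on products of nonnegative numbers, I do not anticipate a genuine obstacle; the only subtlety worth flagging is the strict positivity of each $p_{\sf L}(\mathcal{E}^{l})$, which is needed to legitimately pass from equal products to equal factors and which is secured by the standing convention (see Eq.~\eqref{eq:esb}) that the probabilities $\eta_{i}^{l}$ are nonzero.
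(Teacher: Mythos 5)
Your proof is correct and follows essentially the same route as the paper: both directions hinge on the factorization $p_{\sf G}\bigl(\bigotimes_{l=1}^{L}\mathcal{E}^{l}\bigr)=\prod_{l=1}^{L}p_{\sf G}(\mathcal{E}^{l})$ from Eq.~\eqref{eq:pgpd} combined with the sandwich $\prod_{l}p_{\sf L}(\mathcal{E}^{l})\leqslant p_{\sf L}\leqslant p_{\sf G}$ on the composite ensemble. Your explicit remark that strict positivity of each $p_{\sf L}(\mathcal{E}^{l})$ (guaranteed by Inequality~\eqref{eq:plet}) is what licenses passing from equal products to equal factors is a small point the paper leaves implicit.
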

%%%%%%%%%%%%%%%%%%%%%%
\begin{proof}
If Eqs.~\eqref{eq:didp} and \eqref{eq:plpg} are satisfied, then Eq.~\eqref{eq:lftc} holds for all $l=1,\ldots,L$, otherwise we have
\begin{equation}\label{eq:usow}
\prod_{l=1}^{L}p_{\sf L}(\mathcal{E}^{l})
< \prod_{l=1}^{L}p_{\sf G}(\mathcal{E}^{l})
=p_{\sf G}\left(\bigotimes_{l=1}^{L}\mathcal{E}^{l}\right)
=p_{\sf L}\left(\bigotimes_{l=1}^{L}\mathcal{E}^{l}\right),
\end{equation}
where the first equality is from Eq.~\eqref{eq:pgpd}.
The converse is also true because
\begin{equation}\label{eq:ptpi}
p_{\sf G}\left(\bigotimes_{l=1}^{L}\mathcal{E}^{l}\right)
=\prod_{l=1}^{L}p_{\sf L}(\mathcal{E}^{l})
\leqslant p_{\sf L}\left(\bigotimes_{l=1}^{L}\mathcal{E}^{l}\right)
\leqslant p_{\sf G}\left(\bigotimes_{l=1}^{L}\mathcal{E}^{l}\right),
\end{equation}
where the equality is due to Eqs.~\eqref{eq:pgpd} and \eqref{eq:lftc}, and
the inequalities is from Inequalities~\eqref{eq:pleq} and \eqref{eq:grin}.
\end{proof}
%%%%%%%%%%%%%%%%%%%%%%
 
%%%%%%%%%%%%%%%%%%%%%%
\begin{corollary}\label{cor:bepg}
Given that the optimal LOCC discrimination of $\bigotimes_{l=1}^{L}\mathcal{E}^{l}$ in Eq.~\eqref{eq:seqe} is factorizable, that is, Eq.~\eqref{eq:didp} holds, we have
\begin{equation}\label{eq:plbe}
\max_{\vec{c}\in\mathbb{N}_{\vec{n}}}\eta_{\vec{c}}<
p_{\sf L}\left(\bigotimes_{l=1}^{L}\mathcal{E}^{l}\right)
<p_{\sf G}\left(\bigotimes_{l=1}^{L}\mathcal{E}^{l}\right)
\end{equation}
if and only if
\begin{subequations}\label{eq:eane}
\begin{align}
\max\{\eta_{1}^{l},\ldots,\eta_{n_{l}}^{l}\}<p_{\sf L}(\mathcal{E}^{l}),\label{eq:eatc}
\\
p_{\sf L}(\mathcal{E}^{l'})<p_{\sf G}(\mathcal{E}^{l'})\label{eq:fsne}
\end{align}
\end{subequations}
for some $l,l'\in\{1,\ldots,L\}$.
\end{corollary}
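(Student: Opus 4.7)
The plan is to reduce both strict inequalities in \eqref{eq:plbe} to corresponding factor-wise strict inequalities, exploiting the multiplicative structure of all three quantities under the factorizability hypothesis.

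First, I would write out product formulas for all three quantities appearing in \eqref{eq:plbe}. The hypothesis \eqref{eq:didp} gives $p_{\sf L}(\bigotimes_{l=1}^{L}\mathcal{E}^{l})=\prod_{l=1}^{L}p_{\sf L}(\mathcal{E}^{l})$; the already established Eq.~\eqref{eq:pgpd} gives $p_{\sf G}(\bigotimes_{l=1}^{L}\mathcal{E}^{l})=\prod_{l=1}^{L}p_{\sf G}(\mathcal{E}^{l})$; and the product form $\eta_{\vec{c}}=\prod_{l=1}^{L}\eta_{c_{l}}^{l}$ together with positivity of each $\eta_{i}^{l}$ yields $\max_{\vec{c}\in\mathbb{N}_{\vec{n}}}\eta_{\vec{c}}=\prod_{l=1}^{L}\max_{i}\eta_{i}^{l}$, since the maximum of a product of positive independent factors is the product of the factor-wise maxima. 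Under these identifications, the two strict inequalities of \eqref{eq:plbe} become
\begin{equation*}
\prod_{l=1}^{L}\max_{i}\eta_{i}^{l}<\prod_{l=1}^{L}p_{\sf L}(\mathcal{E}^{l})\quad\text{and}\quad\prod_{l=1}^{L}p_{\sf L}(\mathcal{E}^{l})<\prod_{l=1}^{L}p_{\sf G}(\mathcal{E}^{l}).
\end{equation*}

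Second, for each $l$, Inequalities~\eqref{eq:plet} and \eqref{eq:ineqp} yield the factor-wise weak chain $\max_{i}\eta_{i}^{l}\leqslant p_{\sf L}(\mathcal{E}^{l})\leqslant p_{\sf G}(\mathcal{E}^{l})$, and all three quantities are strictly positive because every preparation probability $\eta_{i}^{l}$ is assumed nonzero. For finitely many positive reals $a_{l}\leqslant b_{l}$, one has $\prod_{l}a_{l}<\prod_{l}b_{l}$ if and only if $a_{l}<b_{l}$ for some $l$. Applying this elementary observation separately to each of the two product inequalities above establishes that the first is equivalent to \eqref{eq:eatc} for some $l$, and the second to \eqref{eq:fsne} for some $l'$. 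Combining the two equivalences gives the desired statement.

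I do not anticipate a serious obstacle in this argument, since the corollary is essentially a bookkeeping consequence of the factorizability hypothesis together with Eq.~\eqref{eq:pgpd}. The only point requiring any care is the strict positivity of the factors, which is what allows the "strict product" criterion to be applied cleanly; this positivity is immediate from the standing assumption that all preparation probabilities are nonzero.
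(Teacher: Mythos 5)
Your proposal is correct and follows essentially the same route as the paper: both arguments rest on the fact that, under factorizability, all three quantities in \eqref{eq:plbe} factor into per-step quantities ($\max_{\vec{c}}\eta_{\vec{c}}=\prod_{l}\max_{i}\eta_{i}^{l}$, the hypothesis \eqref{eq:didp}, and Eq.~\eqref{eq:pgpd}), so each strict product inequality is equivalent to strictness of at least one factor-wise inequality. The only cosmetic difference is that the paper delegates the second equivalence to Theorem~\ref{thm:scfl} and the first to the remark surrounding Eq.~\eqref{eq:plof}, whereas you apply one elementary lemma about products of positive reals uniformly to both.
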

%%%%%%%%%%%%%%%%%%%%%%
\begin{proof}
Let us first assume that the inequalities in \eqref{eq:plbe} are satisfied.
Under this assumption, Condition~\eqref{eq:eatc} holds; otherwise, Eq.~\eqref{eq:didp} leads us to saturation of the first inequality in \eqref{eq:grin}, which contradicts the first inequality in \eqref{eq:plbe}.
From Theorem~\ref{thm:scfl} and the second inequality in \eqref{eq:plbe}, it follows that Condition~\eqref{eq:fsne} is also satisfied.
%%%%%%%%%%%%%%%%%%%%%%

%%%%%%%%%%%%%%%%%%%%%%
Conversely, suppose that Condition~\eqref{eq:eane} holds.
The first inequality in \eqref{eq:plbe} then follows from Condition~\eqref{eq:eatc} and the argument in the paragraph containing Eq.~\eqref{eq:plof}.
The second inequality in \eqref{eq:plbe} also holds due to Eq.~\eqref{eq:didp} and Condition~\eqref{eq:fsne} together with Theorem~\ref{thm:scfl}.
\end{proof}
%%%%%%%%%%%%%%%%%%%%%%

%%%%%%%%%%%%%%%%%%%%%%
The following theorem establishes a necessary and sufficient condition satisfying both Eq.~\eqref{eq:didp} and saturation of the second inequality in \eqref{eq:grin}.
%%%%%%%%%%%%%%%%%%%%%%
\begin{theorem}\label{thm:sclf}
For a multi-party quantum sequence ensemble $\bigotimes_{l=1}^{L}\mathcal{E}^{l}$ in Eq.~\eqref{eq:seqe}, we have
\begin{equation}\label{eq:plsl}
\prod_{l=1}^{L}p_{\sf L}(\mathcal{E}^{l})=p_{\sf L}\left(\bigotimes_{l=1}^{L}\mathcal{E}^{l}\right)=p_{\sf SEP}\left(\bigotimes_{l=1}^{L}\mathcal{E}^{l}\right)
\end{equation}
if and only if there exist LOCC measurements $\{M_{i}^{1}\}_{i=1}^{n_{1}},\ldots,\{M_{i}^{L}\}_{i=1}^{n_{L}}$ and a Hermitian operator $H$ satisfying \begin{subequations}\label{eq:cfsp}
\begin{eqnarray}
H-\eta_{\vec{c}}\rho_{\vec{c}}\in\mathbb{SEP}^{*},\label{eq:cfsp1}\\
\Tr[M_{\vec{c}}(H-\eta_{\vec{c}}\rho_{\vec{c}})]=0\label{eq:cfsp2}
\end{eqnarray}
\end{subequations}
for all $\vec{c}=(c_{1},\ldots,c_{L})\in\mathbb{N}_{\vec{n}}$, where $M_{\vec{c}}$ is defined in Eq.~\eqref{eq:mvcd}.
\end{theorem}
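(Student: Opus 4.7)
The plan is to derive both implications from Proposition~\ref{pro:nsps} applied to the quantum sequence ensemble $\bigotimes_{l=1}^{L}\mathcal{E}^{l}$, using the fact that a tensor product of LOCC measurements is again an LOCC (hence separable) measurement and that the resulting average success probability factorizes as in Eq.~\eqref{eq:pgrw}.

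For the forward direction ($\Rightarrow$), assume Eq.~\eqref{eq:plsl}. For each $l=1,\ldots,L$, pick an LOCC measurement $\{M_{i}^{l}\}_{i=1}^{n_{l}}$ attaining $p_{\sf L}(\mathcal{E}^{l})$, which is allowed since Eq.~\eqref{eq:ple} is written with a maximum. Form $M_{\vec{c}}=M_{c_{1}}^{1}\otimes\cdots\otimes M_{c_{L}}^{L}$ as in Eq.~\eqref{eq:mvcd}; the collection $\{M_{\vec{c}}\}_{\vec{c}\in\mathbb{N}_{\vec{n}}}$ is an LOCC measurement (perform each local LOCC protocol independently at each step), and in particular a separable measurement. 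By Eq.~\eqref{eq:pgrw} this measurement attains the value $\prod_{l=1}^{L}p_{\sf L}(\mathcal{E}^{l})$, which by Eq.~\eqref{eq:plsl} equals $p_{\sf SEP}(\bigotimes_{l=1}^{L}\mathcal{E}^{l})$. Thus $\{M_{\vec{c}}\}_{\vec{c}}$ is a separable measurement realizing the optimal separable discrimination of $\bigotimes_{l=1}^{L}\mathcal{E}^{l}$, so Proposition~\ref{pro:nsps} applied to this ensemble yields a Hermitian $H$ satisfying Conditions~\eqref{eq:cfsp1} and \eqref{eq:cfsp2}.

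For the converse ($\Leftarrow$), suppose such LOCC measurements $\{M_{i}^{l}\}_{i=1}^{n_{l}}$ and such $H$ exist. The tensor-product collection $\{M_{\vec{c}}\}_{\vec{c}\in\mathbb{N}_{\vec{n}}}$ is again an LOCC measurement, so in particular separable. Conditions~\eqref{eq:cfsp1} and \eqref{eq:cfsp2} combined with Proposition~\ref{pro:nsps} imply that this measurement attains $p_{\sf SEP}(\bigotimes_{l=1}^{L}\mathcal{E}^{l})$. Using Eq.~\eqref{eq:pgrw}, the value it attains equals $\prod_{l=1}^{L}\sum_{i=1}^{n_{l}}\eta_{i}^{l}\Tr(\rho_{i}^{l}M_{i}^{l})$, and since each $\{M_{i}^{l}\}_{i=1}^{n_{l}}$ is an LOCC measurement, each factor is bounded above by $p_{\sf L}(\mathcal{E}^{l})$. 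This yields
\begin{equation}
p_{\sf SEP}\left(\bigotimes_{l=1}^{L}\mathcal{E}^{l}\right)=\prod_{l=1}^{L}\sum_{i=1}^{n_{l}}\eta_{i}^{l}\Tr(\rho_{i}^{l}M_{i}^{l})\leqslant \prod_{l=1}^{L}p_{\sf L}(\mathcal{E}^{l}).
\end{equation}
Combining with $\prod_{l=1}^{L}p_{\sf L}(\mathcal{E}^{l})\leqslant p_{\sf L}(\bigotimes_{l=1}^{L}\mathcal{E}^{l})\leqslant p_{\sf SEP}(\bigotimes_{l=1}^{L}\mathcal{E}^{l})$ from Inequalities~\eqref{eq:pleq} and \eqref{eq:grin} pinches the whole chain to equality, giving Eq.~\eqref{eq:plsl}.

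The main point requiring care, rather than a deep obstacle, is that the bridge between the individual $\{M_{i}^{l}\}_{i=1}^{n_{l}}$ and the global witness $H$ goes through Proposition~\ref{pro:nsps} in both directions: forward, one must first identify a candidate separable measurement for the full ensemble (obtained by tensoring LOCC-optimal local measurements) before invoking the proposition to produce $H$; backward, one must verify that the hypothesized $H$ forces the tensor-product measurement to be separable-optimal for the full ensemble and then exploit the factorized expression in Eq.~\eqref{eq:pgrw} to squeeze the product $\prod_{l=1}^{L}p_{\sf L}(\mathcal{E}^{l})$ between $p_{\sf L}$ and $p_{\sf SEP}$ of the composite ensemble.
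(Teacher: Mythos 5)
Your proof is correct and follows essentially the same route as the paper's: both directions apply Proposition~\ref{pro:nsps} to the tensor product of the single-step LOCC measurements, using Eq.~\eqref{eq:pgrw} to factorize its success probability, and the converse is closed by the same squeeze $p_{\sf SEP}(\bigotimes_{l}\mathcal{E}^{l})\leqslant\prod_{l}p_{\sf L}(\mathcal{E}^{l})\leqslant p_{\sf L}(\bigotimes_{l}\mathcal{E}^{l})\leqslant p_{\sf SEP}(\bigotimes_{l}\mathcal{E}^{l})$. The only cosmetic difference is that in the forward direction you read $\prod_{l}p_{\sf L}(\mathcal{E}^{l})=p_{\sf SEP}(\bigotimes_{l}\mathcal{E}^{l})$ directly off Eq.~\eqref{eq:plsl}, whereas the paper re-derives it through the chain in Eq.~\eqref{eq:mrim}; both are valid.
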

%%%%%%%%%%%%%%%%%%%%%%
\begin{proof}
Let us assume that Eq.~\eqref{eq:plsl} is satisfied.
For each $l=1,\ldots,L$, we also denote $\{M_{i}^{l}\}_{i=1}^{n_{l}}$ as a LOCC measurement providing  $p_{\sf L}(\mathcal{E}^{l})$.
Since every LOCC measurement is a separable measurement, the LOCC measurement $\{M_{\vec{c}}\}_{\vec{c}\in\mathbb{N}_{\vec{n}}}$ in Eq.~\eqref{eq:mvcd} is a separable measurement and it provides
\begin{equation}\label{eq:lmsp}
\sum_{\vec{c}\in\mathbb{N}_{\vec{n}}}\eta_{\vec{c}}\Tr(\rho_{\vec{c}}M_{\vec{c}})
=\prod_{l=1}^{L}p_{\sf L}(\mathcal{E}^{l}),
\end{equation}
which are from Eq.~\eqref{eq:pgrw} and the assumption of $\{M_{i}^{l}\}_{i=1}^{n_{l}}$ for each $l=1,\ldots,L$. 
The right-hand side of Eq.~\eqref{eq:lmsp} can be rewritten as
\begin{equation}\label{eq:pppp}
\prod_{l=1}^{L}p_{\sf L}(\mathcal{E}^{l})=p_{\sf SEP}\left(\bigotimes_{l=1}^{L}\mathcal{E}^{l}\right)
\end{equation}
because
\begin{equation}\label{eq:mrim}
\prod_{l=1}^{L}p_{\sf L}(\mathcal{E}^{l})
\leqslant\prod_{l=1}^{L}p_{\sf SEP}(\mathcal{E}^{l})
\leqslant p_{\sf SEP}\left(\bigotimes_{l=1}^{L}\mathcal{E}^{l}\right)
=\prod_{l=1}^{L}p_{\sf L}(\mathcal{E}^{l}),
\end{equation}
where the inequalities are from Inequalities~\eqref{eq:spin} and \eqref{eq:pseq}, and the equality is due to Eq.~\eqref{eq:plsl}.
Thus, Proposition~\ref{pro:nsps} together with Eqs.~\eqref{eq:lmsp} and \eqref{eq:pppp} leads us to the existence of a Hermitian operator $H$ satisfying Condition~\eqref{eq:cfsp}.
%%%%%%%%%%%%%%%%%%%%%%

%%%%%%%%%%%%%%%%%%%%%%
Conversely, suppose that LOCC measurements $\{M_{i}^{1}\}_{i=1}^{n_{1}},\ldots,\{M_{i}^{L}\}_{i=1}^{n_{L}}$ and a Hermitian operator $H$ satisfy Condition~\eqref{eq:cfsp}.
Since every LOCC measurement is a separable measurement, the LOCC measurement $\{M_{\vec{c}}\}_{\vec{c}\in\mathbb{N}_{\vec{n}}}$ in Eq.~\eqref{eq:mvcd} is a separable measurement, therefore we have
\begin{equation}\label{eq:sppi}
p_{\sf SEP}\left(\bigotimes_{l=1}^{L}\mathcal{E}^{l}\right)
=\sum_{\vec{c}\in\mathbb{N}_{\vec{n}}}\eta_{\vec{c}}\Tr(\rho_{\vec{c}}M_{\vec{c}})
=\prod_{l=1}^{L}\sum_{i=1}^{n_{l}}\eta_{i}^{l}\Tr(\rho_{i}^{l}M_{i}^{l}),
\end{equation}
where the first and second equalities are from Proposition~\ref{pro:nsps} and Eq.~\eqref{eq:pgrw}, respectively.
The last term in Eq.~\eqref{eq:sppi} is bounded above as follows:
\begin{equation}\label{eq:dine}
\prod_{l=1}^{L}\sum_{i=1}^{n_{l}}\eta_{i}^{l}\Tr(\rho_{i}^{l}M_{i}^{l})
\leqslant\prod_{l=1}^{L}p_{\sf L}(\mathcal{E}^{l})
\leqslant
p_{\sf L}\left(\bigotimes_{l=1}^{L}\mathcal{E}^{l}\right)
\leqslant
p_{\sf SEP}\left(\bigotimes_{l=1}^{L}\mathcal{E}^{l}\right),
\end{equation}
where the first inequality is from the definition in Eq.~\eqref{eq:ple}, the second inequality is by
Inequalities~\eqref{eq:pleq}, and the last inequality is due to Inequality~\eqref{eq:grin}.
Thus, Eq.~\eqref{eq:sppi} and Inequality~\eqref{eq:dine} lead us to Eq.~\eqref{eq:plsl}.
\end{proof}
%%%%%%%%%%%%%%%%%%%%%%

%%%%%%%%%%%%%%%%%%%%%%
Unlike Theorems~\ref{thm:mpen} and \ref{thm:scfl} that cannot verify Eq.~\eqref{eq:didp} if Inequality~\eqref{eq:plbe} holds, we note that Theorem~\ref{thm:sclf} can be used to verify Eq.~\eqref{eq:didp} even when Inequality~\eqref{eq:plbe} holds.
We illustrate this in the following example.
%%%%%%%%%%%%%%%%%%%%%%

%%%%%%%%%%%%%%%%%%%%%%
\begin{example}\label{ex:maex}
For integers $m,d,L\geqslant2$, let us consider the ensembles $\mathcal{E}^{1},\ldots,\mathcal{E}^{L}$, each identical to the $m$-qu$d$it state ensemble $\mathcal{E}=\{\eta_{i},\rho_{i}\}_{i=1}^{d+2}$ consisting of $d+1$ separable states and one entangled state,
\begin{align}\label{eq:maex}
\eta_{i}=\frac{1}{d^{m}+d},~&\rho_{i}=\Psi_{i-1}^{m},~i=1,\ldots,d,\nonumber\\
\eta_{d+1}=\frac{d^{m}-d}{d^{m}+d},~&\rho_{d+1}=\frac{1}{d^{m}-d}\Big(\mathbbm{1}_{d}^{m}-\sum_{j=0}^{d-1}\Psi_{j}^{m}\Big),\nonumber\\
\eta_{d+2}=\frac{d}{d^{m}+d},~&\rho_{d+2}=\Phi_{d}^{m}
\end{align}
where $\mathbbm{1}_{d}^{m}$ and $\Phi_{d}^{m}$ are defined in Eq.~\eqref{eq:maxe} and 
\begin{equation}\label{eq:psii}
\Psi_{i}^{m}=\ket{i\cdots i}_{\mathsf{A}_{1}\cdots\mathsf{A}_{m}}\!\bra{i\cdots i}.
\end{equation}
\end{example}
%%%%%%%%%%%%%%%%%%%%%%

%%%%%%%%%%%%%%%%%%%%%%
To show the validity of Eq.~\eqref{eq:didp} for the quantum sequence ensemble $\bigotimes_{l=1}^{L}\mathcal{E}^{l}$ in Example~\ref{ex:maex}, we show that $\bigotimes_{l=1}^{L}\mathcal{E}^{l}$ satisfies Condition~\eqref{eq:cfsp} of Theorem~\ref{thm:sclf}.
For Condition~\eqref{eq:cfsp1}, let us consider the Hermitian operator
\begin{equation}\label{eq:ehde}
H=\tfrac{1}{(d^{m}+d)^{L}}\underbrace{\mathbbm{1}_{d}^{m}\otimes\cdots\otimes\mathbbm{1}_{d}^{m}}_{L}.
\end{equation}
Due to the identical structure of $\mathcal{E}^{1},\ldots,\mathcal{E}^{L}$ in Example~\ref{ex:maex} and the symmetry of $H$ in Eq.~\eqref{eq:ehde} under the permutation of $\mathcal{E}^{1},\ldots,\mathcal{E}^{L}$, Condition~\eqref{eq:cfsp1} holds for all $\vec{c}\in\mathbb{N}_{\vec{n}}$ if and only if it holds for all $\vec{c}=(c_{1},\ldots,c_{L})\in\mathbb{N}_{\vec{n}}$ with entries in non-increasing order, that is,
\begin{equation}\label{eq:dvdf}
c_{1}\geqslant\cdots\geqslant c_{L}.
\end{equation}
%%%%%%%%%%%%%%%%%%%%%%

%%%%%%%%%%%%%%%%%%%%%%
For each $\vec{c}=(c_{1},\ldots,c_{L})\in\mathbb{N}_{\vec{n}}$ with Eq.~\eqref{eq:dvdf}, we have
\begin{eqnarray}\label{eq:reex}
H-\eta_{\vec{c}}\rho_{\vec{c}}
&=&H-\underbrace{\eta_{d+2}\rho_{d+2}\otimes\cdots\otimes\eta_{d+2}\rho_{d+2}}_{t}\otimes
\eta_{c_{t+1}}\rho_{c_{t+1}}\otimes\cdots\otimes\eta_{c_{L}}\rho_{c_{L}}\nonumber\\
&=&\frac{1}{(d^{m}+d)^{L}}\Big(\mathbbm{1}_{d^{t}}^{m}\otimes\mathbbm{1}_{d}^{m}\otimes\cdots\otimes\mathbbm{1}_{d}^{m}-d^{t}\Phi_{d^{t}}^{m}\otimes R_{c_{t+1}}\otimes\cdots\otimes R_{c_{L}}\Big)
\end{eqnarray}
where $t\in\{0,1,\ldots,L\}$ denotes the number of entries equal to $d+2$ in $\vec{c}$, and $R_{1},\ldots,R_{d+2}$ are the operators obtained from  $\eta_{1}\rho_{1},\ldots,\eta_{d+2}\rho_{d+2}$ in Eq.~\eqref{eq:maex} by multiplying $d^{m}+d$, that is,
\begin{eqnarray}\label{eq:erdf}
R_{i}&=&\Psi_{i-1}^{m},~i=1,\ldots,d,\nonumber\\
R_{d+1}&=&\mathbbm{1}_{d}^{m}-\sum_{j=0}^{d-1}\Psi_{j}^{m},~R_{d+2}=d\Phi_{d}^{m}.
\end{eqnarray}
The last equality in Eq.~\eqref{eq:reex} holds because the tensor product of $t$ copies of $\mathbbm{1}_{d}^{m}$ is the $m$-qu$d^{t}$it identity operator $\mathbbm{1}_{d^{t}}^{m}$ and the tensor product of $t$ copies of $\Phi_{d}^{m}$ is the $m$-qu$d^{t}$it GHZ state $\Phi_{d^{t}}^{m}$.
%%%%%%%%%%%%%%%%%%%%%%

%%%%%%%%%%%%%%%%%%%%%%
From Eq.~\eqref{eq:reex} and the relation
\begin{equation}\label{eq:reab}
\bigotimes_{l=1}^{K}A_{l}-\bigotimes_{l=1}^{K}B_{l}=\sum_{k=1}^{K}A_{1}\otimes\cdots\otimes A_{K-k}\otimes (A_{K-k+1}-B_{K-k+1})\otimes B_{K-k+2}\otimes\cdots\otimes B_{K}
\end{equation}
for any operators $A_{1},B_{1},\ldots,A_{K},B_{K}$, we have
\begin{eqnarray}\label{eq:exrr}
H-\eta_{\vec{c}}\rho_{\vec{c}}
&=&\frac{1}{(d^{m}+d)^{L}}\Big(\mathbbm{1}_{d^{t}}^{m}-d^{t}\Phi_{d^{t}}^{m}\Big)\otimes R_{c_{t+1}}\otimes\cdots\otimes R_{c_{L}}
\nonumber\\
&&+\frac{1}{(d^{m}+d)^{L}}\sum_{l=t-1}^{L-2}\mathbbm{1}_{d^{t}}^{m}\otimes
\underbrace{\mathbbm{1}_{d}^{m}\otimes\cdots\otimes\mathbbm{1}_{d}^{m}}_{L-2-l}\otimes (\mathbbm{1}_{d}^{m}-R_{c_{t+L-2-l}})\otimes R_{c_{t+L-1-l}}\otimes\cdots\otimes R_{c_{L}}.
\end{eqnarray}
In Eq.~\eqref{eq:exrr}, $\mathbbm{1}_{d^{t}}^{m}-d^{t}\Phi_{d^{t}}^{m}$ is block positive by Inclusion~\eqref{eq:idps}, and both $R_{c_{l}}$ and $\mathbbm{1}_{d}^{m}-R_{c_{l}}$ are separable for all $l=t+1,\ldots,L$ because $R_{1},\ldots,R_{d+1}$ in Eq.~\eqref{eq:erdf} are separable.
Since the tensor product of a block-positive operator and a separable operators is block positive, the right-hand side of Eq.~\eqref{eq:exrr} is block positive\cite{rutk2014}.
Therefore, Condition~\eqref{eq:cfsp1} holds for all $\vec{c}\in\mathbb{N}_{\vec{n}}$.
%%%%%%%%%%%%%%%%%%%%%%

%%%%%%%%%%%%%%%%%%%%%%
For Condition~\eqref{eq:cfsp2}, consider the measurements $\{M_{i}^{1}\}_{i=1}^{d+2},\ldots,\{M_{i}^{L}\}_{i=1}^{d+2}$, each identical to the measurement $\mathcal{M}=\{M_{i}\}_{i=1}^{d+2}$,
\begin{eqnarray}\label{eq:emdd}
M_{i}&=&\Psi_{i-1}^{m},~i=1,\ldots,d,\nonumber\\
M_{d+1}&=&\mathbbm{1}_{d}^{m}-\sum_{j=0}^{d-1}\Psi_{j}^{m},~
M_{d+2}=\mathbb{O}_{d}^{m}
\end{eqnarray}
where $\mathbb{O}_{d}^{m}$ is the $m$-qu$d$it zero operator and $\Psi_{i}^{m}$ is defined in Eq.~\eqref{eq:psii}.
We also note that $\mathcal{M}$ in Eq.~\eqref{eq:emdd} is an LOCC measurement because it can be implemented by performing the same local measurement $\{\ket{i}\!\bra{i}\}_{i=0}^{d-1}$ on each party.
%%%%%%%%%%%%%%%%%%%%%%

%%%%%%%%%%%%%%%%%%%%%%
For each $\vec{c}=(c_{1},\ldots,c_{L})\in\mathbb{N}_{\vec{n}}$, we have
\begin{eqnarray}\label{eq:exex}
\Tr[M_{\vec{c}}(H-\eta_{\vec{c}}\rho_{\vec{c}})]
&=&\frac{1}{(d^{m}+d)^{L}}
\Tr\Bigg[M_{\vec{c}}
\sum_{l=1}^{L}\underbrace{\mathbbm{1}_{d}^{m}\otimes\cdots\otimes\mathbbm{1}_{d}^{m}}_{L-l}\otimes (\mathbbm{1}_{d}^{m}-R_{c_{L-l+1}})\otimes R_{c_{L-l+2}}\otimes\cdots\otimes R_{c_{L}}\Bigg]
\nonumber\\
&=&\frac{1}{(d^{m}+d)^{L}}
\sum_{l=1}^{L}
\Bigg[\prod_{k=1}^{L-l}\Tr M_{c_{k}}\Bigg]\Tr[M_{c_{L-l+1}}(\mathbbm{1}_{d}^{m}-R_{c_{L-l+1}})]
\Bigg[\prod_{k'=L-l+2}^{L}\Tr(M_{c_{k'}}R_{c_{k'}})\Bigg],
\end{eqnarray}
where $M_{\vec{c}}$ is defined in Eq.~\eqref{eq:mvcd}, $R_{i}$ is defined in Eq.~\eqref{eq:erdf}, and the first equality is from the relation in Eq.~\eqref{eq:reab}.
From the definitions of $R_{i}$ and $M_{i}$ in Eqs.~\eqref{eq:erdf} and \eqref{eq:emdd}, we can easily see that 
\begin{equation}\label{eq:timr}
\Tr[M_{i}(\mathbbm{1}-R_{i})]=0
\end{equation}
for all $i=1,\ldots,d+2$. 
Thus, Eqs.~\eqref{eq:exex} and \eqref{eq:timr} lead us to Condition~\eqref{eq:cfsp2}.
%%%%%%%%%%%%%%%%%%%%%%

%%%%%%%%%%%%%%%%%%%%%%
For the quantum sequence ensemble $\bigotimes_{l=1}^{L}\mathcal{E}^{l}$ in Example~\ref{ex:maex}, the Hermitian operator $H$ in Eq.~\eqref{eq:ehde} and the LOCC measurements $\{M_{i}^{1}\}_{i=1}^{d+2},\ldots,\{M_{i}^{L}\}_{i=1}^{d+2}$ in Eq.~\eqref{eq:emdd} satisfy Condition~\eqref{eq:cfsp} of Theorem~\ref{thm:sclf}, therefore Eq.~\eqref{eq:didp} holds.
In other words, the optimal LOCC discrimination of $\bigotimes_{l=1}^{L}\mathcal{E}^{l}$ is factorizable.
%%%%%%%%%%%%%%%%%%%%%%

%%%%%%%%%%%%%%%%%%%%%%
Now, we show the validity of Inequality~\eqref{eq:plbe} for the quantum sequence ensemble $\bigotimes_{l=1}^{L}\mathcal{E}^{l}$ in Example~\ref{ex:maex}. 
By using the result for the ensemble $\mathcal{E}$ in Eq.~\eqref{eq:maex}\cite{ha20231}, we have
\begin{equation}\label{eq:exre}
p_{\sf L}(\mathcal{E})=\tfrac{d^{m}}{d^{m}+d}<p_{\sf G}(\mathcal{E}).
\end{equation}
Moreover, Eq.~\eqref{eq:maex} leads us to
\begin{equation}\label{eq:gmpl}
\max\{\eta_{1},\ldots,\eta_{d+2}\}=\eta_{d+1}=\tfrac{d^{m}-d}{d^{m}+d}<
\tfrac{d^{m}}{d^{m}+d}
\end{equation}
for any $m,d\geqslant2$. 
From Eqs.~\eqref{eq:exre} and \eqref{eq:gmpl}, we have
\begin{equation}\label{eq:lbpg}
\max\{\eta_{1},\ldots,\eta_{d+2}\}<p_{\sf L}(\mathcal{E})<p_{\sf G}(\mathcal{E}).
\end{equation}
The quantum sequence ensemble $\bigotimes_{l=1}^{L}\mathcal{E}^{l}$ in Example~\ref{ex:maex} satisfies Eq.~\eqref{eq:didp} and Inequality~\eqref{eq:lbpg}.
Thus, Corollary~\ref{cor:bepg} implies Inequality~\eqref{eq:plbe} for the ensemble of the example.
%%%%%%%%%%%%%%%%%%%%%%

%%%%%%%%%%%%%%%%%%%%%%
%      Section       %
%%%%%%%%%%%%%%%%%%%%%%
\section{Discussion}\label{sec:dis}
%%%%%%%%%%%%%%%%%%%%%%
We have considered the discrimination of multi-party quantum sequences under LOCC constraints, and provided conditions under which the optimal LOCC discrimination of a multi-party quantum sequence ensemble can be factorized into that of each individual ensemble. 
We have further established a necessary and sufficient condition under which the optimal LOCC discrimination of a multi-party quantum state ensemble can be realized just by guessing the state with the largest probability. 
Our results have been illustrated with examples of multi-party quantum states in an arbitrary dimension, showing the cases that such factorizability of optimal LOCC discrimination is possible.
%%%%%%%%%%%%%%%%%%%%%%

%%%%%%%%%%%%%%%%%%%%%%
We note that our results offer a valuable framework for investigating the fundamental limits of quantum data hiding\cite{terh2001,divi2002,egge2002}. The concept of quantum data hiding is to conceal classical data from multiple players by using a multi-party quantum state ensemble, so that the data can be perfectly recovered by global measurements that require collaboration among all players, while LOCC measurements reveal no information about the data. 
%%%%%%%%%%%%%%%%%%%%%%

%%%%%%%%%%%%%%%%%%%%%%
Perfect recoverability requires the encoding states to be mutually orthogonal, whereas Corollary~\ref{cor:qdh} shows that complete LOCC-concealment requires the states to be mutually identical. 
Thus our result demonstrates that it is fundamentally impossible for any single quantum state ensemble to simultaneously guarantee perfect recoverability and complete concealment under LOCC operations.
Alternatively, various quantum data-hiding protocols have been developed that leverage quantum sequence discrimination to asymptotically limit the information accessible about the hidden data through LOCC measurements alone\cite{lami2018,lami2021,ha20241,ha20252}.
%%%%%%%%%%%%%%%%%%%%%%

%%%%%%%%%%%%%%%%%%%%%%
If we want to hide multiple classical bits by concealing each bit using a quantum sequence ensemble, the amount of information accessible about the data only by LOCC is determined by the optimal LOCC discrimination of the entire quantum sequence ensemble. 
%To guarantee that each bit is independently concealed, it is necessary for the optimal LOCC discrimination among the quantum sequence ensembles to be factorizable. 
To guarantee that each bit is independently concealed, it is necessary for the optimal LOCC discrimination of the entire quantum sequence ensemble to be factorized into that of each individual quantum sequence ensemble. 
Thus, our results in Theorems~\ref{thm:mpen}, \ref{thm:scfl} and \ref{thm:sclf} provide a foundation for the independent concealment of multiple classical bits using quantum sequences.
%%%%%%%%%%%%%%%%%%%%%%

%%%%%%%%%%%%%%%%%%%%%%
As illustrated in Example~\ref{ex:pleo}, the optimal LOCC discrimination of a quantum sequence ensemble is not always factorizable. 
The ensemble considered in this example is constructed as a tensor product of entangled state ensembles.
This naturally raises the question of whether the non-factorizability of optimal LOCC discrimination arises due to the presence of entanglement. 
Accordingly, an interesting direction for future research is to investigate whether there exists a separable quantum sequence ensemble for which the optimal LOCC discrimination is not factorizable.
%%%%%%%%%%%%%%%%%%%%%%

%%%%%%%%%%%%%%%%%%%%%%
%  Acknowledgments   %
%%%%%%%%%%%%%%%%%%%%%%
\section*{Acknowledgments}
This work was supported by Korea Research Institute for defense Technology planning and advancement (KRIT) grant funded by Defense Acquisition Program Administration(DAPA)(KRIT-CT-23–031), a National Research Foundation of Korea(NRF) grant funded by the Korean government(Ministry of Science and ICT) (No.NRF2023R1A2C1007039), and the Institute for Information \& Communications Technology Planning \& Evaluation(IITP) grant funded by the Korean government(MSIP)(Grant No. RS-2025-02304540). JSK was supported by Creation of the Quantum Information Science R\&D Ecosystem(Grant No. 2022M3H3A106307411) through the National Research Foundation of Korea(NRF) funded by the Korean government(Ministry of Science and ICT).
%%%%%%%%%%%%%%%%%%%%%%

%%%%%%%%%%%%%%%%%%%%%%
%     References     %
%%%%%%%%%%%%%%%%%%%%%%
 
%%%%%%%%%%%%%%%%%%%%%%
\end{document}